\documentclass[conference]{IEEEtran}
\IEEEoverridecommandlockouts

\usepackage{mathpazo}
\usepackage{times}
\usepackage{cite}
\usepackage{amsmath,amssymb,amsfonts}
\usepackage{algorithmic}
\usepackage{graphicx}
\usepackage{textcomp}
\usepackage{xcolor}
\usepackage{breqn}
\usepackage{mathtools}
\usepackage{bbm}
\usepackage{bm}
\usepackage{amsthm}
\usepackage{mathrsfs}
\usepackage{balance}

\DeclareSymbolFont{cmsymbols}{OMS}{cmsy}{m}{n}
\let\emptyset\relax
\DeclareMathSymbol{\emptyset}{\mathord}{cmsymbols}{"3B}

\def\BibTeX{{\rm B\kern-.05em{\sc i\kern-.025em b}\kern-.08em
    T\kern-.1667em\lower.7ex\hbox{E}\kern-.125emX}}

\newtheorem{lemma}{Lemma}
\newtheorem{theorem}{Theorem}

\newtheorem{remark}{Remark}

\begin{document}

\pagestyle{plain}

\title{Private Structured-Subset Retrieval}

\author{
\IEEEauthorblockN{Maha Issa and Anoosheh Heidarzadeh}
\IEEEauthorblockA{
Department of Electrical and Computer Engineering\\
Santa Clara University, Santa Clara, CA, USA\\
\{missa,aheidarzadeh\}@scu.edu
}
}

\maketitle

\thispagestyle{plain}

\begin{abstract}
We introduce the \emph{Private Structured-Subset Retrieval (PSSR)} problem, where a user retrieves $D$ messages from a database of $K$ messages replicated across $N$ non-colluding servers, and the demand is restricted to a known structured family of $D$-subsets. This formulation generalizes Multi-message Private Information Retrieval (MPIR) and captures settings where the demand space is constrained by application-specific structure. Focusing on balanced ${\{0,1\}}$-linear schemes, a class that includes several best-known MPIR schemes, we derive converse bounds on the maximum retrieval rate and minimum subpacketization level required to achieve any given rate. We also develop an optimization-based framework to construct schemes for general structured demand families, providing flexibility in optimizing the retrieval rate or the subpacketization level. When specialized to the full demand family, this framework recovers known balanced $\{0,1\}$-linear MPIR constructions; for more restricted demand families, it can exploit the demand structure to increase the retrieval rate, reduce the subpacketization level, or both. We demonstrate this through a structured-demand example in which the proposed PSSR scheme simultaneously achieves a higher rate and requires a smaller subpacketization than the best-known MPIR scheme for the same parameters $N$, $K$, and $D$. Our parallel work on contiguous-demand families further illustrates the scope of this framework by yielding rate-optimal schemes with substantially smaller subpacketization and no field-size restrictions, improving upon MPIR-based schemes.
\end{abstract}

\section{Introduction}

Private Information Retrieval (PIR) is a fundamental problem in information-theoretic privacy~\cite{SJ2017,SJ2016ArbitraryTIFS,SJ2018Multiround,TSC2019,VBU2022}. 
In this problem, a user wishes to retrieve a message from a database stored across one or multiple remote servers, without revealing the identity of the desired message. 
Although downloading the entire database trivially preserves privacy, it incurs a prohibitive download cost when the database contains many messages. 
This motivates the design of PIR schemes that maximize the retrieval rate, defined as the ratio of the number of desired message symbols to the total number of retrieved symbols.

Several variations of the PIR problem have been studied in the literature
(see, e.g.,~\cite{VWU2023,UAGJTT2022}, and references therein). 
Among these variants, the Multi-message PIR (MPIR) problem, originally introduced in~\cite{BU2018} and further studied in~\cite{WHS2022,HWS2025,WHS2025}, generalizes classical PIR by allowing the user to retrieve multiple messages in a single protocol. 
Specifically, the user wishes to retrieve ${D \geq 2}$ messages from a database of $K$ messages. 
In MPIR, the query received by any server must not reveal which of the $\binom{K}{D}$ possible $D$-subsets is requested. 
Capacity-achieving MPIR schemes are known for several regimes, including ${D \geq K/2}$ and ${D \mid K}$~\cite{BU2018}, while improved constructions for certain remaining regimes were developed in~\cite{WHS2022,HWS2025,WHS2025}. 
These schemes extend capacity-achieving PIR schemes for the single-message case~\cite{SJ2017,TSC2019}.

In many practical applications, however, it may be known to the servers that the user's demand belongs to a structured family of $D$-subsets rather than to the full collection of all $\binom{K}{D}$ subsets. 
For example, in a medical dataset containing demographic and clinical information, a research entity may be interested only in retrieving records corresponding to patients who share a common condition, such as diabetes or hypertension. 
Similarly, in applications involving ordered or segmented data, a user may be interested only in consecutive time periods or data segments, rather than arbitrary subsets. 
Thus, in such scenarios, the demand index sets are precisely the $D$-subsets satisfying the structure imposed by the application. 
Consequently, privacy only requires that, from the query received by any server, the server cannot identify which subset within this candidate family is the user's demand index set.

Existing MPIR schemes can be directly applied to this setting, since they guarantee privacy over the full collection of $\binom{K}{D}$ subsets and therefore also over any restricted family. 
However, applying MPIR schemes to such structured settings may be suboptimal in terms of the retrieval rate, since MPIR schemes are designed for the full demand space rather than for a structured demand family. 
Moreover, many existing MPIR schemes are linear and require messages to be divided into subpackets~\cite{BU2018,HWS2025}. 
A large subpacketization level can increase overhead and implementation complexity. 
By exploiting the structure of the demand family, it may be possible to improve the retrieval rate, reduce subpacketization, and obtain simpler, more efficient schemes. 
This motivates the development of retrieval schemes tailored specifically to structured-demand settings.

In this paper, we introduce the \emph{Private Structured-Subset Retrieval (PSSR)} problem. 
In this problem, a database of $K$ messages is replicated across $N$ non-colluding servers, and a user wishes to retrieve $D$ messages whose indices are drawn from a known family of candidate $D$-subsets of ${[1:K]}$. 
The user aims to prevent each server from identifying which subset in this family is requested, while maximizing the retrieval rate.

We focus on a class of schemes termed \emph{balanced ${\{0,1\}}$-linear PSSR schemes}. 
In such schemes, each message is partitioned into $L$ subpackets; 
the user queries each server using ${\{0,1\}}$-linear combinations of message subpackets; and each server returns the corresponding linear combinations as their answer.
Because the queries/answers involve only additions of subpackets, these schemes are simple to implement and work over any field. 
The balanced structure further ensures the user forms and sends equal-size queries to all servers, and each server computes and returns an equal-size answer. 
Moreover, several best-known schemes for PIR and MPIR, which are special cases of PSSR, belong to this class, including the PIR scheme of~\cite{SJ2017} and the MPIR schemes of~\cite{BU2018,HWS2025}. 
Thus, balanced ${\{0,1\}}$-linear schemes provide a natural and rich framework for designing efficient PSSR schemes and analyzing the rate--subpacketization tradeoff.

Our goal is to characterize the fundamental limits of the PSSR problem within this class of schemes. 
To this end, we derive converse bounds on the maximum achievable retrieval rate and the minimum required subpacketization level. 
We also develop an optimization-based framework for constructing PSSR schemes that achieve, or closely approach, these bounds. 
This framework extends the approach of~\cite{HWS2025}, originally developed for MPIR, to general structured demand families by adapting the query and answer patterns to the given structure rather than treating all messages uniformly as in~\cite{HWS2025}.

Our results show that the PSSR rate upper bound can strictly exceed the best-known MPIR rate upper bound for the same values of $N$, $K$, and $D$.
This indicates that rate-optimal MPIR schemes may be suboptimal for PSSR, in which case our proposed framework can exploit the restricted demand structure to achieve better rates. 
Additionally, by treating the subpacketization level as an optimization variable rather than a byproduct of the construction, the asymmetry of the structured family can be leveraged to obtain schemes with smaller subpacketization than MPIR-based solutions, even when the optimal retrieval rate is unchanged.
To illustrate these findings, an example with a restricted demand family is presented in which the proposed PSSR scheme achieves the maximum achievable retrieval rate with the minimum required subpacketization level among balanced linear schemes.
For the same values of $N$, $K$, and $D$, the best-known MPIR scheme achieves a strictly lower retrieval rate and requires a larger subpacketization level, showing that restricting the demand family can improve retrieval efficiency and reduce implementation complexity.

Finally, although the tightness of the general PSSR rate and subpacketization bounds remains open, our parallel work~\cite{IH2026PCBRarXiv} shows that, when the proposed framework is specialized to the demand family of all ${K-D+1}$ contiguous length-${D}$ message blocks, it yields rate-optimal schemes for all $K$ and $D$, with subpacketization levels optimal for a broad range of parameter values. 
This specialization improves upon existing MPIR-based schemes in different regimes: it achieves the optimal rate when known MPIR schemes are not rate-optimal, reduces subpacketization when MPIR schemes are rate-optimal, and works over any field when MPIR schemes with the same rate and subpacketization require sufficiently large field sizes.

\section{Problem Setup}

For integers $i$ and $j$ with ${0\leq i\leq j}$, we denote the set ${\{i,i+1,\dots,j\}}$ by ${[i:j]}$.
We denote random variables by bold-face symbols and their realizations by regular symbols.
Throughout, we fix an arbitrary prime power $q$, denote the finite field of order $q$ by $\mathbb{F}_q$, and denote the $L$-dimensional vector space over $\mathbb{F}_q$ by $\mathbb{F}_q^L$ for any integer $L\geq 1$.
We also denote the set of positive integers by $\mathbb{N}$ and the set of nonnegative integers by $\mathbb{N}_0$.

Consider a dataset consisting of $K$ messages ${\mathrm{X}_1,\dots,\mathrm{X}_K}$, replicated across ${N\geq 2}$ non-colluding servers. 
Suppose each message $\mathrm{X}_i$ for ${i \in [1:K]}$ consists of $L$ symbols from ${\mathbb{F}_q}$. 
That is, ${\mathrm{X}_i\in \mathbb{F}_{q}^{L}}$. 
For any non-empty $\mathrm{U} \subseteq [1:K]$, let ${\mathrm{X}_{\mathrm{U}} \coloneqq \{\mathrm{X}_i: i\in \mathrm{U}\}}$ denote the set of messages indexed by $\mathrm{U}$.

Suppose a user wishes to retrieve $D$ messages, for some ${D\in[2:K-1]}$, indexed by $\mathrm{W}\in \{\mathrm{W}_1,\dots,\mathrm{W}_E\}$, where each $\mathrm{W}_j$ is a subset of $[1:K]$ of size $D$.

We refer to $\mathrm{X}_{\mathrm{W}}$ as the \emph{demand messages},  ${\mathrm{X}_{[1:K]\setminus \mathrm{W}}}$ as the \emph{interference messages}, $\mathrm{W}$ as the \emph{demand index set}, and ${\mathrm{W}_1,\dots,\mathrm{W}_E}$ as the \emph{candidate demand index sets}. 

We assume that the random variables 
${\mathbf{X}_1,\dots,\mathbf{X}_K}$ are independent and uniformly distributed over ${\mathbb{F}_{q}^{L}}$; 
the random variable $\mathbf{W}$ is distributed arbitrarily over $\{\mathrm{W}_1,\dots,\mathrm{W}_E\}$, subject to the condition that every $\mathrm{W}_j$ has a nonzero probability; and the random variables ${\mathbf{X}_{[1:K]}}$ and $\mathbf{W}$ are independent.

The user generates $N$ queries ${\mathrm{Q}^{[\mathrm{W}]}_{1},\dots,\mathrm{Q}^{[\mathrm{W}]}_{N}}$, and sends query ${\mathrm{Q}^{[\mathrm{W}]}
_{n}}$ to server $n$ for each $n\in [1:N]$. 
Each query is a (possibly stochastic) function of the demand index set, constructed without a prior access to the messages, i.e., 
\begin{equation}
\label{eq:Q_indep_X_S}
I(\mathbf{Q}_{[1:N]}^{[\mathrm{W}]};\mathbf{X}_{[1:K]})=0,
\end{equation} where $\mathbf{Q}_{[1:N]}^{[\mathrm{W}]}\coloneqq \{\mathbf{Q}^{[\mathrm{W}]}_{1},\dots,\mathbf{Q}^{[\mathrm{W}]}_{N}\}$. 

Upon receiving the query ${\mathrm{Q}_{n}^{[\mathrm{W}]}}$, each server $n$ computes an answer ${\mathrm{A}_{n}^{[\mathrm{W}]}}$ and sends it back to the user. 
Answers are deterministic functions of the queries and messages. 
That is, 
\begin{equation} 
\label{eq:A_func_Q_X_S}
H(\mathbf{A}_{n}^{[\mathrm{W}]} | \mathbf{Q}_{n}^{[\mathrm{W}]}, \mathbf{X}_{[1:K]})=0, \quad \forall n \in [1:N].
\end{equation}

Once all the servers' answers are received, the user must be able to recover the demand messages, i.e.,
\begin{equation} 
\label{eq:correctness}
H( \mathbf{X}_{\mathrm{W}} | \mathbf{Q}_{[1:N]}^{[\mathrm{W}]}, \mathbf{A}_{[1:N]}^{[\mathrm{W}]})=0,
\end{equation} where $\mathbf{A}_{[1:N]}^{[\mathrm{W}]}\coloneqq \{\mathbf{A}^{[\mathrm{W}]}_{1},\dots,\mathbf{A}^{[\mathrm{W}]}_{N}\}$. 
We refer to this requirement as the \emph{correctness condition}. 

The information available to any server must reveal no information regarding the realization $\mathrm{W}$. 
That is,
\begin{equation}
\label{eq:privacy}	I(\mathbf{W};\mathbf{Q}_{n}^{[\mathrm{W}]},\mathbf{A}_{n}^{[\mathrm{W}]},\mathbf{X}_{[1:K]})=0, \quad \forall n \in [1:N].
\end{equation}
This requirement, which we refer to as the \emph{privacy condition}, keeps the user's demand index set private from any server.

The problem is to design a scheme that simultaneously satisfies the correctness and privacy conditions. 
We refer to this problem as \emph{Private Structured-Subset Retrieval (PSSR)}. 

Without loss of generality, we restrict attention to instances satisfying $\big|\bigcup_{j=1}^{E}\mathrm{W}_j\big|=K$ and $\big|\bigcap_{j=1}^{E}\mathrm{W}_j\big|=0$.
Indeed, if some index ${i\in[1:K]}$ belongs to none of the sets $\mathrm{W}_j$, then the message $\mathrm{X}_i$ is irrelevant and can be removed, yielding an equivalent instance with ${K-1}$ messages. 
Likewise, if some index $i$ belongs to all of the sets $\mathrm{W}_j$, then $\mathrm{X}_i$ can be recovered separately without affecting correctness or privacy, and removing this index from every $\mathrm{W}_j$ again yields an equivalent instance with ${K-1}$ messages.

When ${D\mid K}$ and ${E = K/D}$, the PSSR problem is equivalent to the classical single-message PIR problem~\cite{SJ2017}: each candidate demand subset can be viewed as a super-message consisting of $K/E$ messages, and the user privately retrieves one such super-message from the $E$ super-messages stored across the $N$ servers.
Additionally, when ${E=\binom{K}{D}}$, the PSSR problem reduces to the MPIR problem~\cite{BU2018}. 

In this work, we focus on a class of PSSR schemes that we call \emph{balanced $\{0,1\}$-linear PSSR schemes}. 
In such schemes, each message is partitioned into $L$ subpackets, each consisting of a single message symbol. The user queries each server for a collection of linear combinations of message subpackets with coefficients in $\{0,1\}$, and
each server answers the user with the corresponding linear combinations.
Moreover, the queries sent to all servers have the same total length, and the answers returned by all servers have the same total length.

We define the \emph{retrieval rate} of a scheme as the ratio between the amount of information required by the user and the total amount of information retrieved from all servers, namely
\begin{equation}\label{eq:RateDef}
\frac{H(\mathbf{X}_{\mathbf{W}})}{\sum_{n=1}^N H (\mathbf{A}_{n}^{[\mathbf{W}]} | \mathbf{Q}_{n}^{[\mathbf{W}]})}.
\end{equation}
We also define the \emph{subpacketization level} as the number of subpackets into which the scheme partitions each message.

The goals of this work are fourfold:
\begin{itemize}
\item[(i)] to establish an upper bound $R^{*}$ on the maximum retrieval rate achievable by balanced ${\{0,1\}}$-linear PSSR schemes over all subpacketization levels, in terms of the number of servers $N$ and the candidate demand index sets $\mathrm{W}_1,\dots,\mathrm{W}_E$ (and, in turn, the total number of messages $K$ and the number of demand messages $D$);
\item[(ii)] to construct a scheme that achieves a rate $R_{*}$ matching (or closely approaching) the bound $R^{*}$;
\item[(iii)] to establish a lower bound $L_{*}$ on the minimum subpacketization level required to achieve rate $R_{*}$; and
\item[(iv)] to construct a scheme achieving rate $R_{*}$ with a subpacketization level $L^{*}$ matching (or close to) the bound $L_{*}$.
\end{itemize}

\section{Main results}

This section presents our main converse and achievability results for the PSSR problem.   

To simplify notation, for each ${j\in[1:E]}$, each ${\mathrm{S}\subseteq[1:K]}$, and each ${l\in[1:D]}$, define
\[
\mathbbm{V}_{j,l}(\mathrm{S})
\coloneqq
\bigl\{
\mathrm{V}\subseteq \mathrm{W}_j\setminus \mathrm{S}:\ |\mathrm{V}|=l
\bigr\},
\]
and let $\mathbbm{V}_{j}(\mathrm{S})\coloneqq \bigsqcup_{l=1}^{D}\mathbbm{V}_{j,l}(\mathrm{S})$. 
Similarly, for each ${l\in[0:D-1]}$, define
\[
\mathbbm{U}_{j,l}(\mathrm{S})
\coloneqq
\bigl\{
\mathrm{U}\subseteq [1:K]\setminus \mathrm{S}:\ 
\mathrm{U}\not\subseteq \mathrm{W}_j,\ 
|\mathrm{U}\cap \mathrm{W}_j|=l
\bigr\},
\]
and let $\mathbbm{U}_{j}(\mathrm{S})\coloneqq \bigsqcup_{l=0}^{D-1}\mathbbm{U}_{j,l}(\mathrm{S})$. 

\begin{theorem}
\label{thm:PRSR capacity}
For $N$ servers and $E$ candidate demand index sets $\mathrm{W}_1,\dots,\mathrm{W}_E$, 
the maximum retrieval rate achievable by any balanced $\{0,1\}$-linear PSSR scheme is upper bounded by
\begin{align}
\label{eq:PSR ub}
R^{*} &\coloneqq D  
  \left( \max 
 \sum_{j=1}^{E} \frac{1}{N^{j-1}} \left| {\mathrm{W}_{\pi(j)}} \setminus \textstyle\bigcup_{k=0}^{j-1} {\mathrm{W}_{\pi(k)}} \right| \right)^{-1},
\end{align}
where $\pi(0)\coloneqq 0$,  $\mathrm{W}_{0}\coloneqq \emptyset$, and the maximization is over all permutations $\pi: [1:E]\rightarrow [1:E]$, 
and is lower bounded by 
\begin{equation}
\label{eq:PSSR lb}
R_{*} \coloneqq \frac{D}{N} \left(\min \frac{1}{L}\sum_{\substack{\mathrm{U}\subseteq [1:K]}} T_{\mathrm{U}} \right)^{-1},
\end{equation}
where the minimization is over the integer variables $\{T_{\mathrm{U}}\}$, $\{I_{\mathrm{U},\mathrm{V}}^{[\mathrm{W}_j]}\}$, $\{J^{[\mathrm{W}_j]}_{\mathrm{V},i}(k)\}$, and $L$, subject to the constraints~\eqref{eq:set_1_cons}--\eqref{eq:L_geq_1}.
Here, $T_{\mathrm{U}}$ denotes the number of retrieved linear combinations per server with message support $\mathrm{U}\subseteq[1:K]$. 
The auxiliary variables $I_{\mathrm{U},\mathrm{V}}^{[\mathrm{W}_j]}$ and $J_{\mathrm{V},i}^{[\mathrm{W}_j]}(k)$, formally defined in Section~\ref{sec:Ach_Proofs}, respectively track the interference-cancellation steps and the subsequent recovery of demand-message subpackets. 
Finally, $L$ denotes the subpacketization level of the scheme.

\begin{figure}[t]
\centering
\begin{minipage}{\columnwidth}\vspace{-0.04cm}
\begin{align}
& \sum_{\mathrm{V}\in \mathbbm{V}_{j}(\mathrm{U})}
I^{[\mathrm{W}_{j}]}_{\mathrm{U},\mathrm{V}}
+
(N-1)
\sum_{\mathrm{V}\in \mathbbm{V}_j([1:K]\setminus \mathrm{U})}
I^{[\mathrm{W}_{j}]}_{\mathrm{U}\setminus \mathrm{V},\mathrm{V}}
\leq T_{\mathrm{U}},
\nonumber\\
& \hspace{1.5em}
\forall j \in [1:E],\ 
\forall \mathrm{U}\subseteq[1:K],\
\mathrm{U}\not\subseteq \mathrm{W}_j ,
\label{eq:set_1_cons}
\\[0.4em]
& T_{\{i\}}
+
(N-1)
\sum_{\mathrm{U}\in \mathbbm{U}_j(\{i\})}
I^{[\mathrm{W}_{j}]}_{\mathrm{U},\{i\}}
\nonumber\\
& \quad
+
\sum_{k=2}^{D}
\sum_{l=1}^{k-1}
\sum_{\mathrm{V}\in \mathbbm{V}_{j,l}(\{i\})}
J^{[\mathrm{W}_{j}]}_{\mathrm{V}\cup\{i\},i}(k)
=
\frac{L}{N},
\nonumber\\
& \hspace{1.5em}
\forall j\in[1:E],\
\forall i\in \mathrm{W}_{j},
\label{eq:set_2_cons}
\\[0.4em]
& T_{\mathrm{V}}
+
(N-1)
\sum_{\mathrm{U}\in \mathbbm{U}_j(\mathrm{V})}
I^{[\mathrm{W}_{j}]}_{\mathrm{U},\mathrm{V}}
\geq
\sum_{k=|\mathrm{V}|}^{D}
\sum_{i\in \mathrm{V}}
J^{[\mathrm{W}_{j}]}_{\mathrm{V},i}(k),
\nonumber\\
& \hspace{1.5em}
\forall j\in[1:E],\
\forall \mathrm{V}\subseteq \mathrm{W}_j,\
|\mathrm{V}|\geq 2,
\label{eq:set_3_cons}
\\[0.4em]
& (N-1)T_{\{i\}}
+
(N-1)^2
\sum_{k=1}^{m}
\sum_{\mathrm{U}\in \mathbbm{U}_{j,k-1}(\{i\})}
I^{[\mathrm{W}_{j}]}_{\mathrm{U},\{i\}}
\nonumber\\
& \quad
+
(N-1)
\sum_{k=2}^{m}
\sum_{l=1}^{k-1}
\sum_{\mathrm{V}\in \mathbbm{V}_{j,l}(\{i\})}
J^{[\mathrm{W}_{j}]}_{\mathrm{V}\cup\{i\},i}(k)
\nonumber\\
& \geq
N
\sum_{k=2}^{m+1}
\sum_{l=1}^{k-1}
\sum_{\mathrm{V}\in \mathbbm{V}_{j,l}(\{i\})}
\sum_{\mathrm{U}\in
\mathbbm{U}_{j,k-1-l}(\mathrm{V}\cup\{i\})}
I^{[\mathrm{W}_{j}]}_{\mathrm{U}\cup\{i\},\mathrm{V}}
\nonumber\\
& \quad
+
\sum_{k=2}^{m+1}
\sum_{l=1}^{k-1}
\sum_{\mathrm{V}\in \mathbbm{V}_{j,l}(\{i\})}
\sum_{i'\in \mathrm{V}}
J^{[\mathrm{W}_{j}]}_{\mathrm{V}\cup\{i\},i'}(k),
\nonumber\\
& \hspace{1.5em}
\forall j\in[1:E],\
\forall i\in \mathrm{W}_{j},\
\forall m\in[1:D-1],
\label{eq:indexing_cons_1}
\\[0.4em]
& \sum_{\substack{\mathrm{U}\subseteq[1:K]\\ i\in \mathrm{U}}}
T_{\mathrm{U}}
\leq L,
\quad
\forall i\in[1:K],
\label{eq:indexing_cons_2}
\\[0.4em]
& T_{\mathrm{U}}\in\mathbb{N}_0,
\quad
\forall \mathrm{U}\subseteq[1:K],\
|\mathrm{U}|\geq 1,
\label{eq:T_non_neg}
\\[0.4em]
& I^{[\mathrm{W}_{j}]}_{\mathrm{U},\mathrm{V}}
\in\mathbb{N}_0,
\quad
\forall j\in[1:E],\
\forall \mathrm{V}\subseteq \mathrm{W}_j,\
|\mathrm{V}|\geq 1,
\nonumber\\
& \hspace{1.5em}
\forall \mathrm{U}\subseteq[1:K]\setminus\mathrm{V},\
\mathrm{U}\not\subseteq \mathrm{W}_j,
\label{eq:I_non_neg}
\\[0.4em]
& J^{[\mathrm{W}_{j}]}_{\mathrm{V},i}(k)
\in\mathbb{N}_0,
\quad
\forall j\in[1:E],\
\forall \mathrm{V}\subseteq \mathrm{W}_j,\
|\mathrm{V}|\geq 2,
\nonumber\\
& \hspace{1.5em}
\forall i\in\mathrm{V},\
\forall k\in[|\mathrm{V}|:D],
\label{eq:J_non_neg}
\\[0.4em]
& L\in\mathbb{N}.
\label{eq:L_geq_1}
\end{align}
\end{minipage}\vspace{-0.6cm}
\end{figure}
\end{theorem}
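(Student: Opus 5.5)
The theorem has two parts: a converse, the upper bound $R^{*}$, and an achievability result, the lower bound $R_{*}$. I will prove them separately.

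\textbf{Converse.} I adapt the Sun--Jafar induction argument for PIR to an ordering of the candidate sets. Fix a balanced $\{0,1\}$-linear scheme and a permutation $\pi$, and let $D_{\mathrm{tot}}=\sum_n H(\mathbf{A}_n^{[\mathbf{W}]}|\mathbf{Q}_n^{[\mathbf{W}]})$, which by balance equals $N$ times a per-server answer length $\ell$. The key step is the following recursive inequality:
\begin{equation*}
D_{\mathrm{tot}} \geq L\,\bigl|\mathrm{W}_{\pi(1)}\bigr| + \tfrac{1}{N}\,\Delta_2,
\end{equation*}
where $\Delta_2$ is the download spent on "new" message content when the demand is $\mathrm{W}_{\pi(2)}$, conditioned on $\mathrm{X}_{\mathrm{W}_{\pi(1)}}$. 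The inequality comes from three facts. First, correctness for $\mathrm{W}_{\pi(1)}$ gives $H(\mathbf{A}_{[1:N]}|\mathbf{Q})\ge L|\mathrm{W}_{\pi(1)}| + H(\mathbf{A}_{[1:N]}|\mathbf{X}_{\mathrm{W}_{\pi(1)}},\mathbf{Q})$. Second, by balance and symmetry, $H(\mathbf{A}_{[1:N]}|\mathbf{X}_{\mathrm{W}_{\pi(1)}},\mathbf{Q}) \ge \frac1N\sum_n H(\mathbf{A}_n|\mathbf{X}_{\mathrm{W}_{\pi(1)}},\mathbf{Q}_n)$. Third, privacy lets me replace each single-server conditional entropy by its value under demand $\mathrm{W}_{\pi(2)}$.

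Iterating this with the cumulative conditioning sets $\bigcup_{k<j}\mathrm{W}_{\pi(k)}$, the $j$-th step contributes $N^{-(j-1)}L\,|\mathrm{W}_{\pi(j)}\setminus\bigcup_{k<j}\mathrm{W}_{\pi(k)}|$. This yields $D_{\mathrm{tot}}\ge L\sum_j N^{-(j-1)}|\mathrm{W}_{\pi(j)}\setminus\cdots|$. Maximizing over $\pi$ and using $H(\mathbf{X}_{\mathbf{W}})=DL$ gives $R^{*}$.

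\textbf{Role of linearity.} I use the balanced $\{0,1\}$-linear structure only where needed. Linearity makes the answer entropies ranks of query matrices. Balance then justifies the symmetrization step, namely that each server's answer, conditioned on the side information, carries at least a $1/N$ share of the residual rank; this is usually obtained by averaging over servers.

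\textbf{Main obstacle in the converse.} The hardest part is the per-server interference step. In the one-message case, the interference at step $j$ is just the remaining messages. Here the newly known messages $\mathrm{W}_{\pi(j)}$ overlap previous sets, and I need the residual entropy to drop by exactly the new portion $|\mathrm{W}_{\pi(j)}\setminus\cdots|$. I would first try to handle this with the chain rule over the message subsets, applying correctness for $\mathrm{W}_{\pi(j)}$ together with conditioning on the union of all earlier sets.

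\textbf{Achievability.} I exhibit a scheme for any feasible integer point of constraints \eqref{eq:set_1_cons}--\eqref{eq:L_geq_1}, generalizing the construction of \cite{HWS2025}. Each server receives $T_{\mathrm{U}}$ sums with support $\mathrm{U}$. Constraint \eqref{eq:indexing_cons_2} guarantees that enough distinct subpackets of each message exist to assign indices. The variables $I$ count how many sums of support $\mathrm{U}$ are used as side information to cancel interference in sums of support $\mathrm{U}\cup\mathrm{V}$ at other servers; this is \eqref{eq:set_1_cons}, with the factor $N-1$ counting the other servers. The variables $J$ count the recovery of demand subpackets from mixed demand sums; these are \eqref{eq:set_2_cons} and \eqref{eq:set_3_cons}, the latter making each subpacket of the demand messages decodable with $L/N$ per server. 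Constraint \eqref{eq:indexing_cons_1} ensures that the ordering of the decoding is consistent, in that undesired subpackets appear only after their cancellers.

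Privacy follows by choosing a uniformly random permutation of the subpacket indices of each message, independently. Each server then sees, for every $j$, the same structure: $T_{\mathrm{U}}$ sums of support $\mathrm{U}$ with fresh uniformly random indices. This structure is independent of $\mathbf{W}$. The rate is $DL/(N\sum_{\mathrm{U}}T_{\mathrm{U}})$, and minimizing over the feasible set gives $R_{*}$.
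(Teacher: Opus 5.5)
Your converse follows the paper's route: an induction along an ordering $\pi$ that alternates correctness, a $1/N$ averaging step, and per-server privacy. However, the averaging step is mis-justified, and the ingredient it actually needs is missing.

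\textbf{The averaging step.} Balance only equalizes the per-server answer lengths. It says nothing about $H(\mathbf{A}_n\mid\mathbf{X}_{\mathrm{U}},\mathbf{Q}_n)$. Your gloss (``each server carries at least a $1/N$ share of the residual rank'') is a per-server statement, and it is not the inequality you need. Your displayed inequality is true, but for two other reasons:
(i) $H(\mathbf{A}_{[1:N]}\mid\mathbf{Q}_{[1:N]},\mathbf{X}_{\mathrm{U}})\ge H(\mathbf{A}_n\mid\mathbf{Q}_{[1:N]},\mathbf{X}_{\mathrm{U}})$ for every $n$, and then you average;
(ii) $H(\mathbf{A}_n\mid\mathbf{Q}_{[1:N]},\mathbf{X}_{\mathrm{U}})=H(\mathbf{A}_n\mid\mathbf{Q}_{n},\mathbf{X}_{\mathrm{U}})$. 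This holds because $I(\mathbf{A}_n;\mathbf{Q}_{[1:N]}\mid\mathbf{Q}_n,\mathbf{X}_{\mathrm{U}})\le I(\mathbf{X}_{[1:K]};\mathbf{Q}_{[1:N]}\mid\mathbf{Q}_n,\mathbf{X}_{\mathrm{U}})\le I(\mathbf{X}_{[1:K]};\mathbf{Q}_{[1:N]})=0$, using that $\mathbf{A}_n$ is a function of $(\mathbf{Q}_n,\mathbf{X}_{[1:K]})$.

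You pass from $\mathbf{Q}_{[1:N]}$ to $\mathbf{Q}_n$ without comment. For a lower bound, removing conditioning goes the wrong way. Fact (ii) is exactly what lets you invoke privacy, because privacy constrains only the law of $(\mathbf{Q}_n,\mathbf{A}_n,\mathbf{X}_{[1:K]})$, not that of $(\mathbf{Q}_{[1:N]},\mathbf{A}_n)$. In your rank language, (ii) says: given all queries, server $n$'s conditional rank depends only on its own query, because the messages stay uniform and independent of the queries.

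With (i) and (ii), neither linearity nor balance is used, and the bound holds for every PSSR scheme, as the paper states.

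\textbf{Your ``main obstacle''.} This is not an actual obstacle. Let $\mathrm{S}=\bigcup_{k<j}\mathrm{W}_{\pi(k)}$. Under demand $\mathrm{W}_{\pi(j)}$, correctness, $I(\mathbf{Q}_{[1:N]};\mathbf{X}_{[1:K]})=0$, and message independence give $H(\mathbf{A}_{[1:N]}\mid\mathbf{Q}_{[1:N]},\mathbf{X}_{\mathrm{S}})=L|\mathrm{W}_{\pi(j)}\setminus\mathrm{S}|+H(\mathbf{A}_{[1:N]}\mid\mathbf{Q}_{[1:N]},\mathbf{X}_{\mathrm{S}\cup\mathrm{W}_{\pi(j)}})$.

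\textbf{Achievability.} Your outline agrees with the paper's interpretation of the constraints. It omits the subpacket-indexing procedure, which is what shows that a feasible point yields a decodable scheme. That procedure is also where the construction departs from \cite{HWS2025}. Side symbols may contain demand messages. Putting the same unrecovered demand subpacket in a side symbol and in all its targets would occupy it at every server, leaving no server from which to recover it. The paper avoids this as follows: the side symbol gets a subpacket already recovered from a target server, and the $N-1$ targets get a common subpacket already recovered from the side server. This is what the factor $N$ in the first right-hand term of \eqref{eq:indexing_cons_1} counts.

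Also, the $L/N$-per-server recovery count is \eqref{eq:set_2_cons}. By contrast, \eqref{eq:set_3_cons} is the supply-versus-usage condition for demand-only symbols.
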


The upper bound holds for all PSSR schemes (and hence for the balanced $\{0,1\}$-linear schemes considered here), and its proof follows the same steps as in~\cite[Theorem~1]{CWJ2018}; although~\cite{CWJ2018} adopts a different model for message dependence (see~\cite{CWJ2018} for details), the argument applies directly to arbitrary dependence models, including the one considered here. 

To prove the lower bound, we present a PSSR scheme that generalizes the MPIR scheme of~\cite{HWS2025}, in which all subsets of a fixed size are candidate demands and the scheme is message-wise symmetric.
In contrast, our scheme is tailored to the structure of the demand subsets and allows the query and answer structure to vary across messages. 
This yields a unified framework for both full and structured demand families.
Specifically, we introduce a subclass of balanced ${\{0,1\}}$-linear PSSR schemes parameterized by a set of variables, and develop an optimization framework to select these parameters so as to maximize the retrieval rate subject to the correctness and privacy conditions.

\begin{theorem}
\label{thm:PRSR_L*_bounds}
For $N$ servers and $E$ candidate demand index sets $\mathrm{W}_1,\dots,\mathrm{W}_E$, the minimum subpacketization level required by any balanced $\{0,1\}$-linear PSSR scheme achieving the rate $R_{*}$ is lower bounded by  
\begin{equation}\label{eq:L_lb}
L_{*}\coloneqq \frac{N\alpha}{\gcd(N\alpha,D\beta)},
\end{equation} where $\alpha$ and $\beta$ are coprime integers with $R_{*} = \alpha/\beta$,
and is upper bounded by 
\begin{equation}
\label{eq:L_objc}
L^{*} \coloneqq \min L,
\end{equation}
where the minimization is over the same variables as in Theorem~\ref{thm:PRSR capacity}, namely $\{T_{\mathrm{U}}\}$, $\{I_{\mathrm{U},\mathrm{V}}^{[\mathrm{W}_j]}\}$, $\{J^{[\mathrm{W}_j]}_{\mathrm{V},i}(k)\}$, and $L$, subject to the constraints~\eqref{eq:set_1_cons}--\eqref{eq:L_geq_1} and the additional constraint
\begin{equation}
    \label{eq:min_L_cons}
  \frac{1}{L}  \sum_{\mathrm{U} \subseteq [1:K]} T_{\mathrm{U}} = \frac{D}{NR_{*}}.
\end{equation}
\end{theorem}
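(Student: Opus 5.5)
The plan has two parts, one for each bound in Theorem~\ref{thm:PRSR_L*_bounds}.

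\textbf{Lower bound $L_{*}$.} The argument is a divisibility one. Take any balanced $\{0,1\}$-linear PSSR scheme with subpacketization $L$ that achieves rate $R_{*}=\alpha/\beta$. Because the messages are i.i.d.\ uniform over $\mathbb{F}_q^L$, we have $H(\mathbf{X}_{\mathbf{W}})=DL$ symbols. Balance means every server returns the same number $A\in\mathbb{N}$ of $\{0,1\}$-linear combinations. So $H(\mathbf{A}_n^{[\mathbf{W}]}\mid\mathbf{Q}_n^{[\mathbf{W}]})\le A$, with equality once we discard redundant answers, which can only increase the rate. We may therefore assume $H(\mathbf{A}_n^{[\mathbf{W}]}\mid\mathbf{Q}_n^{[\mathbf{W}]})=A$ for every $n$, and the rate is $DL/(NA)$. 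I expect some care is needed in justifying that the answer entropy equals the integer count $A$. I would argue that the answers of a linear scheme are linearly independent combinations of uniform symbols, after pruning dependent ones without changing correctness or privacy.

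Setting $DL/(NA)=\alpha/\beta$ gives $DL\beta=NA\alpha$. Hence $N\alpha$ divides $D\beta L$, and so $N\alpha/\gcd(N\alpha,D\beta)$ divides $L\cdot D\beta/\gcd(N\alpha,D\beta)$. The two quotients $N\alpha/g$ and $D\beta/g$, with $g=\gcd(N\alpha,D\beta)$, are coprime. Therefore $N\alpha/g$ divides $L$, which gives $L\ge L_{*}$.

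\textbf{Upper bound $L^{*}$.} This part is by achievability, and it reuses the construction proving the lower bound $R_{*}$ in Theorem~\ref{thm:PRSR capacity}. That construction takes any feasible integer point $(\{T_{\mathrm U}\},\{I\},\{J\},L)$ of \eqref{eq:set_1_cons}--\eqref{eq:L_geq_1}. From it, it builds a balanced $\{0,1\}$-linear PSSR scheme with subpacketization $L$ and per-server download $\sum_{\mathrm U}T_{\mathrm U}$, so its rate is $DL/(N\sum_{\mathrm U}T_{\mathrm U})$. Adding constraint \eqref{eq:min_L_cons} forces this rate to equal exactly $R_{*}$.

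This feasible set is nonempty. The minimizer attaining $R_{*}$ in \eqref{eq:PSSR lb} satisfies \eqref{eq:min_L_cons} by definition of $R_{*}$. Existence of that minimizer needs a short argument, since the objective is a rational-valued ratio. I would handle it by noting that scaling all variables by a positive integer preserves feasibility and the ratio, so the infimum is taken over a set of rationals with bounded denominators per $L$. Alternatively, one can simply define $R_{*}$ as attained. Consequently, the minimum of $L$ over this set is attained, and it is achieved by an actual scheme of rate $R_{*}$, giving $L^{*}$ as an upper bound.

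\textbf{Main obstacle.} The key step is the claim that the scheme built from a feasible point is correct and private for every $L$ in the feasible set, not just the optimizing one. I would rely on the construction of Section~\ref{sec:Ach_Proofs}, which is parameterized by arbitrary feasible values, so this should carry over directly.
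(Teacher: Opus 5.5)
Your proposal follows the paper's proof: the lower bound is the same integrality argument on the per-server download $DL/(NR_{*})$, where you usefully spell out the coprimality step giving $L_{*}\mid L$ that the paper only asserts, and the upper bound is the same achievability argument via the ILP obtained by adding~\eqref{eq:min_L_cons}, whose feasible points the paper already shows yield valid schemes. Two side remarks need fixing, though the main line is unaffected. First, pruning linearly dependent answers leaves $H(\mathbf{A}_n^{[\mathbf{W}]}\mid\mathbf{Q}_n^{[\mathbf{W}]})$, and hence the rate, unchanged rather than increasing it, and it can break balance because the rank may vary with the query; so equating the download with the integer symbol count should simply be taken, as the paper implicitly does, as the download measure for balanced $\{0,1\}$-linear schemes. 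Second, attainment of $R_{*}$ comes from the paper's LP reformulation (homogeneous constraints with rational data have a rational optimum, then clear denominators), not from your scaling remark, which does not rule out an infimum approached only as $L\to\infty$.
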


The lower bound follows because a balanced scheme must retrieve an integer number of symbols from each server. 
To prove the upper bound, we refine the optimization framework used to characterize the rate $R_{*}$. 
We retain the same class of schemes and the same set of constraints, but minimize the subpacketization level subject to achieving rate $R_{*}$.

\begin{remark}
\normalfont
For a given demand family, the PSSR rate upper bound $R^{*}$ can be larger than the best-known MPIR rate upper bound, which was originally established in~\cite{BU2018} and is also recovered from~\eqref{eq:PSR ub} by specializing PSSR to the full demand family.
As a result, whenever $R^{*}$ exceeds the MPIR bound and is achievable, rate-optimal MPIR schemes become suboptimal for the corresponding PSSR instance.
\end{remark}

\begin{remark}
\normalfont
The achievable PSSR rate $R_{*}$ is no smaller than the best-known retrieval rate achieved by existing ${\{0,1\}}$-linear MPIR schemes~\cite{BU2018,HWS2025}. 
Indeed, any such scheme remains valid upon restricting the demand space from the full family to an arbitrary subfamily, and thus induces a feasible solution to our optimization problem.
However, these MPIR-induced solutions need not be rate-optimal for a given PSSR instance; by exploiting the structure of the candidate demands, our optimization can instead achieve strictly higher retrieval rates.
\end{remark}

\begin{remark}
\normalfont
Demand structure can be beneficial even when it does not improve the optimal retrieval rate. 
Indeed, restricting the demand space to a structured family may leave the optimal retrieval rate unchanged relative to the full demand family, but a scheme that is rate-optimal for the full family (and hence remains rate-optimal under the restriction) may still require a large subpacketization level.
In contrast, our optimization framework treats the subpacketization level as an explicit design parameter: 
after imposing rate-optimality within the considered class of schemes, it allows us to choose a PSSR scheme with the smallest subpacketization level in that class, rather than inheriting the fixed subpacketization level that is implicit in existing MPIR schemes.
\end{remark}

\begin{remark}
\normalfont
Our parallel work~\cite{IH2026PCBRarXiv} applies the proposed optimization framework to the contiguous-block demand family, consisting of all ${K-D+1}$ contiguous length-${D}$ message blocks without wrap-around.
For this family, we show that the resulting scheme is rate-optimal and, for a broad range of parameter values, has optimal subpacketization matching the corresponding lower bound.
Specifically, relative to the best-known balanced ${\{0,1\}}$-linear MPIR schemes, when $D \nmid K$ our scheme is rate-optimal whereas the MPIR schemes are rate-suboptimal; when $D \mid K$, both are rate-optimal, but our scheme requires subpacketization level $N^{K/D}$, whereas the MPIR schemes require subpacketization level at least $N^{K-D+1}/D$, which can be much larger; see~\cite{H2026} for details.
\end{remark}

\begin{remark}
\normalfont
The tightness of our converse bounds---the rate upper bound $R^{*}$ and the subpacketization-level lower bound $L_{*}$---remains open in general, even in the MPIR setting, which is a special case of PSSR. 
Likewise, the optimality of our scheme---which yields the rate lower bound $R_{*}$ and the subpacketization-level upper bound $L^{*}$---remains open in general, both in terms of retrieval rate and subpacketization level, even for highly structured and symmetric demand families, such as those induced by regular graphs when ${D=2}$ (and by regular uniform hypergraphs when ${D>2}$).
\end{remark}

\begin{remark}
\normalfont
Since both the converse bound and the achievable rate depend on the underlying family of candidate demands, a universal closed-form characterization for all instances is generally out of reach.
Moreover, even for a fixed demand family, the relevant optimization problems can involve a space whose size grows exponentially in the total number of messages $K$ and the number of demand messages $D$.
However, this size can often be reduced by exploiting automorphisms of the demand family.
Instead of treating every subset of messages separately, one can group equivalent subsets into orbits under the induced automorphism action.
Working with one representative from each orbit collapses identical variables and constraints, thereby reducing the size of the resulting optimization problems.
\end{remark}

\section{Converse Proofs}

In this section, we establish our converse results: the upper bound on the retrieval rate, $R^{*}$, in Theorem~\ref{thm:PRSR capacity} and the lower bound on the subpacketization level, $L_{*}$, in Theorem~\ref{thm:PRSR_L*_bounds}.

\subsection{Upper Bounding the Achievable Rate}
\label{sec:conv_proof_PSSR}

The proof relies on the following two lemmas, which are analogous to \cite[Lemma~1]{WBU2022} and \cite[Lemma~2]{WBU2022}, and which we include here for completeness.

\begin{lemma}
    \label{lemma: symmetry}
    For any $\mathrm{W},\mathrm{W}'\in \{\mathrm{W}_1,\dots,\mathrm{W}_E\}$, any ${\mathrm{U}\subseteq[1:K]}$, and any $n\in[1:N]$, it holds that
    \begin{align}
         H(\mathbf{A}_{n}^{[\mathrm{W}]} | \mathbf{X}_{\mathrm{U}}, \mathbf{Q}_{n}^{[\mathrm{W}]}) & = H(\mathbf{A}_{n}^{[\mathrm{W}']} | \mathbf{X}_{\mathrm{U}}, \mathbf{Q}_{n}^{[\mathrm{W}']}).  \label{eq: symmetry_lemma_1}
    \end{align}
\end{lemma}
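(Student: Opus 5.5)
The plan is to derive the equality from the privacy condition~\eqref{eq:privacy} together with the independence of $\mathbf{W}$ and $\mathbf{X}_{[1:K]}$. The key observation is that the joint distribution of $(\mathbf{A}_{n}^{[\mathrm{W}]},\mathbf{Q}_{n}^{[\mathrm{W}]},\mathbf{X}_{[1:K]})$ does not depend on the realization $\mathrm{W}$. Once we have that, any conditional entropy computed from this joint distribution, in particular $H(\mathbf{A}_{n}^{[\mathrm{W}]}|\mathbf{X}_{\mathrm{U}},\mathbf{Q}_{n}^{[\mathrm{W}]})$, is the same for $\mathrm{W}$ and $\mathrm{W}'$.

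First, I will interpret $(\mathbf{Q}_{n}^{[\mathrm{W}]},\mathbf{A}_{n}^{[\mathrm{W}]})$ as the query and answer produced when the demand realization is $\mathrm{W}$. In other words, the random variables in~\eqref{eq:privacy} are $(\mathbf{Q}_n^{[\mathbf{W}]},\mathbf{A}_n^{[\mathbf{W}]},\mathbf{X}_{[1:K]})$, and conditioning on $\mathbf{W}=\mathrm{W}$ yields the variables indexed by $\mathrm{W}$.

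Next, I will use~\eqref{eq:privacy}. Zero mutual information means $\mathbf{W}$ is independent of $(\mathbf{Q}_n,\mathbf{A}_n,\mathbf{X}_{[1:K]})$. Hence, for every realization $\mathrm{W}$ with positive probability, the conditional law of $(\mathbf{Q}_n,\mathbf{A}_n,\mathbf{X}_{[1:K]})$ given $\mathbf{W}=\mathrm{W}$ equals its marginal law. Every $\mathrm{W}_j$ has nonzero probability by assumption, so these conditional laws are all well defined. It follows that the joint law of $(\mathbf{Q}_n^{[\mathrm{W}]},\mathbf{A}_n^{[\mathrm{W}]},\mathbf{X}_{[1:K]})$ is identical to that of $(\mathbf{Q}_n^{[\mathrm{W}']},\mathbf{A}_n^{[\mathrm{W}']},\mathbf{X}_{[1:K]})$.

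Finally, I will marginalize out $\mathbf{X}_{[1:K]\setminus\mathrm{U}}$. Since the conditional entropy $H(\mathbf{A}|\mathbf{X}_{\mathrm{U}},\mathbf{Q})$ is a functional of the joint distribution of $(\mathbf{A},\mathbf{X}_{\mathrm{U}},\mathbf{Q})$ alone, \eqref{eq: symmetry_lemma_1} follows. The case $\mathrm{U}=\emptyset$ is included and gives equality of $H(\mathbf{A}_n|\mathbf{Q}_n)$ across demands.

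The only delicate point is the passage from~\eqref{eq:privacy} to equality of the conditional distributions. This step needs the positive-probability assumption on each $\mathrm{W}_j$ and the convention that the query-generation randomness for realization $\mathrm{W}$ is the same random process conditioned on $\mathbf{W}=\mathrm{W}$. The independence of $\mathbf{W}$ and $\mathbf{X}_{[1:K]}$ ensures that conditioning on $\mathbf{W}$ does not change the law of the messages, so the variables $\mathbf{X}_{[1:K]}$ refer to the same object on both sides. I do not expect any other difficulty; the argument is a direct analogue of~\cite[Lemma~1]{WBU2022}.
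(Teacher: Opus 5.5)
Your proposal is correct and follows essentially the same route as the paper, which uses the privacy condition to equate $H(\mathbf{A}_{n}^{[\mathrm{W}]},\mathbf{Q}_{n}^{[\mathrm{W}]},\mathbf{X}_{\mathrm{U}})$ and $H(\mathbf{Q}_{n}^{[\mathrm{W}]},\mathbf{X}_{\mathrm{U}})$ with their $\mathrm{W}'$ counterparts and then applies the chain rule. You spell out the step the paper leaves implicit, namely that privacy and the positive probability of each $\mathrm{W}_j$ give equal joint laws, and then read off the conditional entropy directly instead of using the chain rule.
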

\begin{proof}
From the privacy condition in~\eqref{eq:privacy}, we have    
    \begin{align}
         H(\mathbf{A}_{n}^{[\mathrm{W}]},  \mathbf{Q}_{n}^{[\mathrm{W}]}, \mathbf{X}_{\mathrm{U}}) & = H(\mathbf{A}_{n}^{[\mathrm{W}']},  \mathbf{Q}_{n}^{[\mathrm{W}']}, \mathbf{X}_{\mathrm{U}}),  \label{eq:user_priv_1}\\   
         H(\mathbf{Q}_{n}^{[\mathrm{W}]}, \mathbf{X}_{\mathrm{U}}) & = H(\mathbf{Q}_{n}^{[\mathrm{W}']}, \mathbf{X}_{\mathrm{U}}). \label{eq:user_priv_2}
    \end{align}
Applying the chain rule to~\eqref{eq:user_priv_1} and using~\eqref{eq:user_priv_2} yields~\eqref{eq: symmetry_lemma_1}, completing the proof. 
\end{proof}

\begin{lemma}
\label{lemma: given_Q1:N_=_given_Qn}
For any $\mathrm{W} \in \{\mathrm{W}_1,\dots,\mathrm{W}_E\}$, any $\mathrm{U} \subseteq [1:K]$, and any $n \in [1:N]$, it holds that
\begin{equation}\label{eq:lemma2}
    H(\mathbf{A}_{n}^{[\mathrm{W}]} | \mathbf{Q}_{[1:N]}^{[\mathrm{W}]}, \mathbf{X}_{\mathrm{U}}) = H(\mathbf{A}_{n}^{[\mathrm{W}]} | \mathbf{Q}_{n}^{[\mathrm{W}]}, \mathbf{X}_{\mathrm{U}}).
\end{equation}
\end{lemma}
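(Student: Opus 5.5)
The plan is to show that, conditioned on $\mathbf{Q}_n^{[\mathrm{W}]}$ and $\mathbf{X}_{\mathrm{U}}$, the answer $\mathbf{A}_n^{[\mathrm{W}]}$ is conditionally independent of the other servers' queries $\mathbf{Q}_{[1:N]\setminus\{n\}}^{[\mathrm{W}]}$. Equality of the two conditional entropies then follows from the chain rule.

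\textbf{Sandwich.} Conditioning cannot increase entropy, so the left-hand side of~\eqref{eq:lemma2} is at most the right-hand side. Their difference is exactly
\[
I(\mathbf{A}_n^{[\mathrm{W}]};\mathbf{Q}_{[1:N]}^{[\mathrm{W}]}\mid \mathbf{Q}_n^{[\mathrm{W}]},\mathbf{X}_{\mathrm{U}}),
\]
so it suffices to show that this quantity is zero.

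\textbf{Bounding by the full message set.} Because $\mathbf{A}_n^{[\mathrm{W}]}$ is a deterministic function of $(\mathbf{Q}_n^{[\mathrm{W}]},\mathbf{X}_{[1:K]})$ by~\eqref{eq:A_func_Q_X_S}, the data-processing inequality gives
\[
I(\mathbf{A}_n^{[\mathrm{W}]};\mathbf{Q}_{[1:N]}^{[\mathrm{W}]}\mid \mathbf{Q}_n^{[\mathrm{W}]},\mathbf{X}_{\mathrm{U}})
\le I(\mathbf{X}_{[1:K]\setminus\mathrm{U}},\mathbf{Q}_n^{[\mathrm{W}]};\mathbf{Q}_{[1:N]}^{[\mathrm{W}]}\mid \mathbf{Q}_n^{[\mathrm{W}]},\mathbf{X}_{\mathrm{U}}).
\]
The right-hand side equals
\[
I(\mathbf{X}_{[1:K]\setminus\mathrm{U}};\mathbf{Q}_{[1:N]}^{[\mathrm{W}]}\mid \mathbf{Q}_n^{[\mathrm{W}]},\mathbf{X}_{\mathrm{U}}).
\]
This last quantity is at most
\[
I(\mathbf{X}_{[1:K]\setminus\mathrm{U}};\mathbf{Q}_{[1:N]}^{[\mathrm{W}]}\mid \mathbf{X}_{\mathrm{U}}),
\]
because $\mathbf{Q}_n^{[\mathrm{W}]}$ is a component of $\mathbf{Q}_{[1:N]}^{[\mathrm{W}]}$. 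Concretely, the chain rule gives
\[
I(\mathbf{X}_{\bar{\mathrm{U}}};\mathbf{Q}_{[1:N]}\mid\mathbf{X}_{\mathrm{U}})
= I(\mathbf{X}_{\bar{\mathrm{U}}};\mathbf{Q}_n\mid\mathbf{X}_{\mathrm{U}})
+ I(\mathbf{X}_{\bar{\mathrm{U}}};\mathbf{Q}_{[1:N]}\mid\mathbf{Q}_n,\mathbf{X}_{\mathrm{U}}),
\]
and both terms on the right are nonnegative.

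\textbf{Closing with independence of queries and messages.} By the chain rule,
\[
I(\mathbf{X}_{[1:K]\setminus\mathrm{U}};\mathbf{Q}_{[1:N]}^{[\mathrm{W}]}\mid\mathbf{X}_{\mathrm{U}})
\le I(\mathbf{X}_{[1:K]};\mathbf{Q}_{[1:N]}^{[\mathrm{W}]}),
\]
and the latter is zero by~\eqref{eq:Q_indep_X_S}. Chaining the inequalities shows the mutual information in the sandwich step is zero, which yields~\eqref{eq:lemma2}.

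The only delicate point is the middle step: one must pass from the answer to the messages while correctly retaining the conditioning on $\mathbf{Q}_n^{[\mathrm{W}]}$, and then legitimately drop that conditioning. The chain-rule decomposition displayed above handles both. The case $\mathrm{U}=[1:K]$ is trivial, since both sides of~\eqref{eq:lemma2} are then zero by~\eqref{eq:A_func_Q_X_S}.
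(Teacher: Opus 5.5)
Your proof is correct and follows essentially the same route as the paper. It uses the same sandwich via conditioning, bounds $I(\mathbf{A}_n^{[\mathrm{W}]};\mathbf{Q}_{[1:N]}^{[\mathrm{W}]}\mid \mathbf{Q}_n^{[\mathrm{W}]},\mathbf{X}_{\mathrm{U}})$ through the messages using~\eqref{eq:A_func_Q_X_S}, and strips the conditioning by the chain rule down to $I(\mathbf{X}_{[1:K]};\mathbf{Q}_{[1:N]}^{[\mathrm{W}]})=0$; the differences are cosmetic (you work with $\mathbf{X}_{[1:K]\setminus\mathrm{U}}$ via data processing and drop $\mathbf{Q}_n^{[\mathrm{W}]}$ before $\mathbf{X}_{\mathrm{U}}$, while the paper adds all of $\mathbf{X}_{[1:K]}$ and drops $\mathbf{X}_{\mathrm{U}}$ first).
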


\begin{proof} Since conditioning cannot increase entropy, we have 
\[
H(\mathbf{A}_{n}^{[\mathrm{W}]} | \mathbf{Q}_{n}^{[\mathrm{W}]}, \mathbf{X}_{\mathrm{U}} )\geq H ( \mathbf{A}_{n}^{[\mathrm{W}]} | \mathbf{Q}_{[1:N]}^{[\mathrm{W}]}, \mathbf{X}_{\mathrm{U}} ).
\]
To prove~\eqref{eq:lemma2}, it thus suffices to show the reverse inequality, 
\[
H(\mathbf{A}_{n}^{[\mathrm{W}]} | \mathbf{Q}_{n}^{[\mathrm{W}]}, \mathbf{X}_{\mathrm{U}} )\leq H ( \mathbf{A}_{n}^{[\mathrm{W}]} | \mathbf{Q}_{[1:N]}^{[\mathrm{W}]}, \mathbf{X}_{\mathrm{U}} ),
\]
which follows from the chain below: 
    \begin{align}
    & H(\mathbf{A}_{n}^{[\mathrm{W}]} | \mathbf{Q}_{n}^{[\mathrm{W}]}, \mathbf{X}_{\mathrm{U}} ) - H ( \mathbf{A}_{n}^{[\mathrm{W}]} | \mathbf{Q}_{[1:N]}^{[\mathrm{W}]}, \mathbf{X}_{\mathrm{U}} ) \nonumber 
    \\ & \quad = I ( \mathbf{A}_{n}^{[\mathrm{W}]} ; \mathbf{Q}_{[1:N]}^{[\mathrm{W}]} | \mathbf{Q}_{n}^{[\mathrm{W}]}, \mathbf{X}_{\mathrm{U}} ) \label{eq:I_An_Q1:N_given_Qn_Xw} 
    \\ & \quad \leq I ( \mathbf{A}_{n}^{[\mathrm{W}]}, \mathbf{X}_{[1:K]}; \mathbf{Q}_{[1:N]}^{[\mathrm{W}]} | \mathbf{Q}_{n}^{[\mathrm{W}]}, \mathbf{X}_{\mathrm{U}} ) 
    \label{eq:monotonicity_MI}
    \\ & \quad = I ( \mathbf{X}_{[1:K]}; \mathbf{Q}_{[1:N]}^{[\mathrm{W}]} | \mathbf{Q}_{n}^{[\mathrm{W}]}, \mathbf{X}_{\mathrm{U}} ) \nonumber \\ & \quad \quad + I ( \mathbf{A}_{n}^{[\mathrm{W}]} ; \mathbf{Q}_{[1:N]}^{[\mathrm{W}]} | \mathbf{Q}_{n}^{[\mathrm{W}]}, \mathbf{X}_{[1:K]}) \label{eq:I_An_Q1:N_given_Qn_Xw_chain_rule} 
    \\ & \quad = I ( \mathbf{X}_{[1:K]}; \mathbf{Q}_{[1:N]}^{[\mathrm{W}]} | \mathbf{Q}_{n}^{[\mathrm{W}]}, \mathbf{X}_{\mathrm{U}} ) \nonumber \\ 
    & \quad \quad  + H ( \mathbf{A}_{n}^{[\mathrm{W}]} | \mathbf{Q}_{n}^{[\mathrm{W}]}, \mathbf{X}_{[1:K]}) - H ( \mathbf{A}_{n}^{[\mathrm{W}]} | \mathbf{Q}_{[1:N]}^{[\mathrm{W}]}, \mathbf{X}_{[1:K]}) \label{eq:I_X_S_Q1:N_+H_An_-H_An} 
    \\ & \quad = I ( \mathbf{X}_{[1:K]}; \mathbf{Q}_{[1:N]}^{[\mathrm{W}]} | \mathbf{Q}_{n}^{[\mathrm{W}]}, \mathbf{X}_{\mathrm{U}} ) \label{eq:I_X_S_Q1:N_given_Qn_Xw} 
    \\ & \quad \leq I ( \mathbf{X}_{[1:K]}; \mathbf{Q}_{[1:N]}^{[\mathrm{W}]} | \mathbf{Q}_{n}^{[\mathrm{W}]}, \mathbf{X}_{\mathrm{U}} ) + I ( \mathbf{X}_{\mathrm{U}}; \mathbf{Q}_{[1:N]}^{[\mathrm{W}]} | \mathbf{Q}_{n}^{[\mathrm{W}]}  ) 
    \label{eq:nonnegativity_MI_1} \\ & \quad = I ( \mathbf{X}_{[1:K]}; \mathbf{Q}_{[1:N]}^{[\mathrm{W}]} | \mathbf{Q}_{n}^{[\mathrm{W}]} ) \label{eq:I_X_S_Q1:N_given_Qn} 
    \\ & \quad \leq I ( \mathbf{X}_{[1:K]}; \mathbf{Q}_{[1:N]}^{[\mathrm{W}]} | \mathbf{Q}_{n}^{[\mathrm{W}]} ) + I ( \mathbf{X}_{[1:K]}; \mathbf{Q}_{n}^{[\mathrm{W}]} ) \label{eq:nonnegativity_MI_2}
    \\ & \quad = I ( \mathbf{X}_{[1:K]}; \mathbf{Q}_{[1:N]}^{[\mathrm{W}]} ) \label{eq:I_X_S_Q1:N} 
    \\ & \quad = 0 \label{eq:I_X_S_Q1:N_equal_0}.
    \end{align}
Here,~\eqref{eq:I_An_Q1:N_given_Qn_Xw} and~\eqref{eq:I_X_S_Q1:N_+H_An_-H_An} follow from the definition of mutual information; 
\eqref{eq:monotonicity_MI} follows from the monotonicity of mutual information;  
\eqref{eq:I_An_Q1:N_given_Qn_Xw_chain_rule},~\eqref{eq:I_X_S_Q1:N_given_Qn}, and~\eqref{eq:I_X_S_Q1:N} follow from the chain rule; 
\eqref{eq:I_X_S_Q1:N_given_Qn_Xw} follows from~\eqref{eq:A_func_Q_X_S}; 
\eqref{eq:nonnegativity_MI_1} and~\eqref{eq:nonnegativity_MI_2} follow from the non-negativity of mutual information; 
and~\eqref{eq:I_X_S_Q1:N_equal_0} follows from \eqref{eq:Q_indep_X_S}.
\end{proof}

We now prove the rate upper bound in Theorem~\ref{thm:PRSR capacity}.
Recall that the retrieval rate, defined in~\eqref{eq:RateDef}, is the ratio between $H(\mathbf{X}_{\mathbf{W}})$ and $\sum_{n=1}^N H(\mathbf{A}^{[\mathbf{W}]}_n|\mathbf{Q}^{[\mathbf{W}]}_n)$. 
First, note that 
\begin{equation}\label{eq:H_Xw}
H(\mathbf{X}_{\mathbf{W}}) = DL.
\end{equation} 
This is because: (i) $H(\mathbf{X}_{\mathbf{W}})\geq H(\mathbf{X}_{\mathbf{W}}|\mathbf{W}) = DL$, since ${\mathbf{X}_1,\dots,\mathbf{X}_K}$ are independent and uniformly distributed over $\mathbb{F}_q^{L}$, which implies $H(\mathbf{X}_{\mathrm{W}_j}) = DL$ for every ${j\in [1:E]}$; 
and $H(\mathbf{X}_{\mathbf{W}})\leq \log_q |\mathbb{F}_q^{DL}| = DL$, since $\mathbf{X}_{\mathbf{W}}$ takes values in $\mathbb{F}_q^{DL}$.   
To upper bound the rate, it thus remains to lower bound $\sum_{n=1}^N H(\mathbf{A}^{[\mathbf{W}]}_n|\mathbf{Q}^{[\mathbf{W}]}_n)$. 
To do this, we write
\begin{align}
  & \sum_{n=1}^N  H(\mathbf{A}_{n}^{[\mathbf{W}]} | \mathbf{Q}_{n}^{[\mathbf{W}]}) \nonumber \\ 
  & \quad \geq \sum_{n=1}^{N} H(\mathbf{A}_{n}^{[\mathbf{W}]} | \mathbf{Q}_{n}^{[\mathbf{W}]},\mathbf{W})\label{eq:cond_on_W} \\ 
  & \quad  = \sum_{n=1}^{N} H(\mathbf{A}_{n}^{[\mathrm{W}_{1}]} | \mathbf{Q}_{n}^{[\mathrm{W}_1]}) \label{eq:remove_cond_on_W} \\
  & \quad =  \sum_{n=1}^N H (\mathbf{A}_{n}^{[\mathrm{W}_1]} | \mathbf{Q}_{[1:N]}^{[\mathrm{W}_1]}) \label{eq: sum_H_An_given_Q1:N} \\ 
  & \quad \geq  H(\mathbf{A}_{[1:N]}^{[\mathrm{W}_1]} |  \mathbf{Q}_{[1:N]}^{[\mathrm{W}_1]}), \label{eq: H_A1:N_given_Q1:N}
\end{align} 
where~\eqref{eq:cond_on_W} follows since conditioning cannot increase entropy; \eqref{eq:remove_cond_on_W} follows from Lemma~\ref{lemma: symmetry}, which implies that $H(\mathbf{A}_{n}^{[\mathrm{W}_j]} | \mathbf{Q}_{n}^{[\mathrm{W}_j]}) = H(\mathbf{A}_{n}^{[\mathrm{W}_1]} | \mathbf{Q}_{n}^{[\mathrm{W}_1]})$ for all ${j\in[1:E]}$, noting that any $\mathrm{W}_{k_1}$, ${k_1\in [1:E]}$, may be used in place of $\mathrm{W}_1$; 
\eqref{eq: sum_H_An_given_Q1:N} follows from Lemma~\ref{lemma: given_Q1:N_=_given_Qn}; 
and~\eqref{eq: H_A1:N_given_Q1:N} follows from the subadditivity of entropy.

Using the chain rule in two different orders, we have
\begin{align}
 & H(\mathbf{A}_{[1:N]}^{[\mathrm{W}_{1}]},\mathbf{X}_{\mathrm{W}_{1}} | \mathbf{Q}_{[1:N]}^{[\mathrm{W}_{1}]}) \nonumber   
 \\ & \quad =   H(\mathbf{A}_{[1:N]}^{[\mathrm{W}_1]} | \mathbf{Q}_{[1:N]}^{[\mathrm{W}_1]})+  H(\mathbf{X}_{\mathrm{W}_1} | \mathbf{A}_{[1:N]}^{[\mathrm{W}_1]}, \mathbf{Q}_{[1:N]}^{[\mathrm{W}_1]}) \label{eq:chain_rule_A_X_1}
 \\ & \quad = H(\mathbf{X}_{\mathrm{W}_1} | \mathbf{Q}_{[1:N]}^{[\mathrm{W}_1]}) + H(\mathbf{A}_{[1:N]}^{[\mathrm{W}_1]} | \mathbf{Q}_{[1:N]}^{[\mathrm{W}_1]},\mathbf{X}_{\mathrm{W}_1} ) \label{eq:chain_rule_A_X_2}
\end{align}
Equating~\eqref{eq:chain_rule_A_X_1} and~\eqref{eq:chain_rule_A_X_2} yields
\begin{align}
 & H(\mathbf{A}_{[1:N]}^{[\mathrm{W}_1]} | \mathbf{Q}_{[1:N]}^{[\mathrm{W}_1]} ) \nonumber 
 \\ & \quad =  H(\mathbf{X}_{\mathrm{W}_1} | \mathbf{Q}_{[1:N]}^{[\mathrm{W}_1]} )+  H(\mathbf{A}_{[1:N]}^{[\mathrm{W}_1]} | \mathbf{Q}_{[1:N]}^{[\mathrm{W}_1]},\mathbf{X}_{\mathrm{W}_1} ) \nonumber 
 \\ & \quad \quad - H(\mathbf{X}_{\mathrm{W}_1} | \mathbf{A}_{[1:N]}^{[\mathrm{W}_1]},\mathbf{Q}_{[1:N]}^{[\mathrm{W}_1]})
 \nonumber \\ & \quad = H(\mathbf{X}_{\mathrm{W}_1}) + H(\mathbf{A}_{[1:N]}^{[\mathrm{W}_1]} | \mathbf{Q}_{[1:N]}^{[\mathrm{W}_1]},\mathbf{X}_{\mathrm{W}_1} ), \label{eq:A_given_Q_2}
\end{align}
where~\eqref{eq:A_given_Q_2} follows from~\eqref{eq:Q_indep_X_S} and~\eqref{eq:correctness}.

Combining~\eqref{eq: H_A1:N_given_Q1:N} and~\eqref{eq:A_given_Q_2}, we have
\begin{align}
& \sum_{n=1}^N H(\mathbf{A}_{n}^{[\mathbf{W}]} | \mathbf{Q}_{n}^{[\mathbf{W}]}) \nonumber \\ 
& \quad \geq H(\mathbf{X}_{\mathrm{W}_1} ) + H(\mathbf{A}_{[1:N]}^{[\mathrm{W}_1]} | \mathbf{Q}_{[1:N]}^{[\mathrm{W}_1]},\mathbf{X}_{\mathrm{W}_1} ).
\label{eq: H_Xw1_plus_H_A1:N_given_Q1:N_Xw1}
\end{align}
To further lower bound $H(\mathbf{A}_{[1:N]}^{[\mathrm{W}_1]} |  \mathbf{Q}_{[1:N]}^{[\mathrm{W}_1]}, 
\mathbf{X}_{\mathrm{W}_1})$, we write 
\begin{align}
  & H(\mathbf{A}_{[1:N]}^{[\mathrm{W}_1]} |  \mathbf{Q}_{[1:N]}^{[\mathrm{W}_1]},\mathbf{X}_{\mathrm{W}_1}) \nonumber \\ & \quad \geq  \frac{1}{N} \sum_{n=1}^{N} H(\mathbf{A}_{n}^{[\mathrm{W}_1]} |  \mathbf{Q}_{[1:N]}^{[\mathrm{W}_1]},\mathbf{X}_{\mathrm{W}_1}) \label{eq: H_Xw1_plus_monotonicity}
    \\ & \quad = \frac{1}{N} \sum_{n=1}^{N} H(\mathbf{A}_{n}^{[\mathrm{W}_1]} |  \mathbf{Q}_{n}^{[\mathrm{W}_1]},\mathbf{X}_{\mathrm{W}_1}) \label{eq: H_Xw1_plus_monotonicity_with_lemma}
    \\ & \quad = \frac{1}{N} \sum_{n=1}^{N} H(\mathbf{A}_{n}^{[\mathrm{W}_2]} |  \mathbf{Q}_{n}^{[\mathrm{W}_2]},\mathbf{X}_{\mathrm{W}_1}) \label{eq: H_Xw1_plus_monotonicity_with_W2}
    \\ & \quad = \frac{1}{N} \sum_{n=1}^{N} H(\mathbf{A}_{n}^{[\mathrm{W}_2]} |  \mathbf{Q}_{[1:N]}^{[\mathrm{W}_2]},\mathbf{X}_{\mathrm{W}_1}) \label{eq: H_Xw1_plus_monotonicity_with_W2_given_Q1:N}
    \\ & \quad \geq \frac{1}{N} H(\mathbf{A}_{[1:N]}^{[\mathrm{W}_2]} | \mathbf{Q}_{[1:N]}^{[\mathrm{W}_2]},\mathbf{X}_{\mathrm{W}_1}) \label{eq: H_Xw1_plus_1/N_H_A1:N_given_Q1:N}
\end{align} 
where~\eqref{eq: H_Xw1_plus_monotonicity} follows from the monotonicity of entropy; 
\eqref{eq: H_Xw1_plus_monotonicity_with_lemma} and~\eqref{eq: H_Xw1_plus_monotonicity_with_W2_given_Q1:N} follow from Lemma~\ref{lemma: given_Q1:N_=_given_Qn}; 
\eqref{eq: H_Xw1_plus_monotonicity_with_W2} follows from Lemma~\ref{lemma: symmetry}, noting that any $\mathrm{W}_{k_2}$, $k_2\in [1:E]\setminus \{k_1\}$, may be used in place of $\mathrm{W}_2$; 
and~\eqref{eq: H_Xw1_plus_1/N_H_A1:N_given_Q1:N} follows from the subadditivity of entropy. 

Combining~\eqref{eq: H_Xw1_plus_H_A1:N_given_Q1:N_Xw1} and~\eqref{eq: H_Xw1_plus_1/N_H_A1:N_given_Q1:N}, we have
\begin{align}
& \sum_{n=1}^N H(\mathbf{A}_{n}^{[\mathbf{W}]} | \mathbf{Q}_{n}^{[\mathbf{W}]}) \nonumber \\ 
& \quad \geq H(\mathbf{X}_{\mathrm{W}_1}) + \frac{1}{N}H(\mathbf{A}_{[1:N]}^{[\mathrm{W}_2]} | \mathbf{Q}_{[1:N]}^{[\mathrm{W}_2]},\mathbf{X}_{\mathrm{W}_1}).
\label{eq:H_Xw1_plus_1/N_H_A1:N_w2_given_Q1:N_W2_Xw1}
\end{align}

Applying steps analogous to those in~\eqref{eq:chain_rule_A_X_1}--\eqref{eq:A_given_Q_2} and~\eqref{eq: H_Xw1_plus_monotonicity}--\eqref{eq: H_Xw1_plus_1/N_H_A1:N_given_Q1:N}, it follows that, for any ${j\in[1:E-1]}$,
\begin{align}
\label{eq: H(A^2) lower bound}
 &  H(\mathbf{A}_{[1:N]}^{[\mathrm{W}_j]} | \mathbf{Q}_{[1:N]}^{[\mathrm{W}_j]},\mathbf{X}_{\mathrm{W}_{[0:j-1]}}) \nonumber \\ 
 & \quad \geq H(\mathbf{X}_{\mathrm{W}_j} | \mathbf{X}_{\mathrm{W}_{[0:j-1]}})+\frac{1}{N}H(\mathbf{A}_{[1:N]}^{[\mathrm{W}_{j+1}]} | \mathbf{Q}_{[1:N]}^{[\mathrm{W}_{j+1}]},\mathbf{X}_{\mathrm{W}_{[1:j]}}),
\end{align} 
where $\mathrm{W}_{[l:m]} \coloneqq \bigcup_{k=l}^{m} \mathrm{W}_{k}$ for ${l,m\in [0:E]}$ with ${l\leq m}$, and $\mathrm{W}_{0}\coloneqq \emptyset$. Note that for each ${j\in [1:E-1]}$, any $\mathrm{W}_{k_{j+1}}$, ${k_{j+1}\in [1:E]\setminus \{k_1,\dots,k_j\}}$, may be used in place of $\mathrm{W}_{j+1}$.  

Moreover, we have 
\begin{align}
& H(\mathbf{A}_{[1:N]}^{[\mathrm{W}_E]} | \mathbf{Q}_{[1:N]}^{[\mathrm{W}_E]},\mathbf{X}_{\mathrm{W}_{[1:E-1]}}) \nonumber \\
& \quad = H(\mathbf{A}_{[1:N]}^{[\mathrm{W}_E]} | \mathbf{Q}_{[1:N]}^{[\mathrm{W}_E]},\mathbf{X}_{\mathrm{W}_{[1:E-1]}}) \nonumber \\ 
& \quad \quad + H( \mathbf{X}_{\mathrm{W}_E} |\mathbf{A}_{[1:N]}^{[\mathrm{W}_E]}, \mathbf{Q}_{[1:N]}^{[\mathrm{W}_E]},\mathbf{X}_{\mathrm{W}_{[1:E-1]}}) \label{eq:H_XwF_given_A1:N_Q1:N_Xw1:F-1}\\
& \quad = H(\mathbf{X}_{\mathrm{W}_E}|\mathbf{Q}_{[1:N]}^{[\mathrm{W}_E]},\mathbf{X}_{\mathrm{W}_{[1:E-1]}}) \nonumber \\ 
& \quad \quad + H(\mathbf{A}_{[1:N]}^{[\mathrm{W}_E]}|\mathbf{Q}_{[1:N]}^{[\mathrm{W}_E]}, \mathbf{X}_{\mathrm{W}_{[1:E]}}) \label{eq:H_XwF_given_Q1:N_Xw1:F-1+H_A1:N_given_Q1:N_Xw1:F} \\ 
& \quad = H(\mathbf{X}_{\mathrm{W}_E}|\mathbf{Q}_{[1:N]}^{[\mathrm{W}_E]},\mathbf{X}_{\mathrm{W}_{[1:E-1]}}) \label{eq:H_XwF_given_Q1:N_Xw1:F-1} \\ 
& \quad = H(\mathbf{X}_{\mathrm{W}_E}|\mathbf{X}_{\mathrm{W}_{[1:E-1]}}), \label{eq:H_XwF_given_Xw1:F-1}
\end{align} 
where~\eqref{eq:H_XwF_given_A1:N_Q1:N_Xw1:F-1} follows from~\eqref{eq:correctness}; 
\eqref{eq:H_XwF_given_Q1:N_Xw1:F-1+H_A1:N_given_Q1:N_Xw1:F} follow from the chain rule;
\eqref{eq:H_XwF_given_Q1:N_Xw1:F-1} follows from~\eqref{eq:A_func_Q_X_S}; 
and~\eqref{eq:H_XwF_given_Xw1:F-1} follows from~\eqref{eq:Q_indep_X_S}.  

Combining~\eqref{eq:H_Xw1_plus_1/N_H_A1:N_w2_given_Q1:N_W2_Xw1},~\eqref{eq: H(A^2) lower bound}, and~\eqref{eq:H_XwF_given_Xw1:F-1} yields
\begin{align}
 \sum_{n=1}^N H (\mathbf{A}_{n}^{[\mathbf{W}]} | \mathbf{Q}_{n}^{[\mathbf{W}]}) & \geq \sum_{j=1}^{E} \frac{1}{N^{j-1}} H(\mathbf{X}_{\mathrm{W}_{j}} |\mathbf{X}_{\mathrm{W}_{[0:j-1]}}) \nonumber
 \\ & = L
 \sum_{j=1}^{E} \frac{1}{N^{j-1}} \left| {\mathrm{W}_{j}} \setminus \textstyle\bigcup_{k=0}^{j-1}\mathrm{W}_{k} \right|, \label{eq: DC final lower bound}
\end{align}
where~\eqref{eq: DC final lower bound} follows because ${\mathbf{X}_1,\dots,\mathbf{X}_K}$ are independent and uniformly distributed over $\mathbb{F}_q^{L}$, and hence, for any  ${\mathrm{U}_1,\mathrm{U}_2\subseteq [1:K]}$, we have \[H(\mathbf{X}_{\mathrm{U}_1}|\mathbf{X}_{\mathrm{U}_2}) = H(\mathbf{X}_{\mathrm{U}_1\setminus \mathrm{U}_2}|\mathbf{X}_{\mathrm{U}_2}) = H(\mathbf{X}_{\mathrm{U}_1\setminus \mathrm{U}_2}) = |\mathrm{U}_1\setminus \mathrm{U}_2|L.\]

Fix an arbitrary permutation $\pi: [1:E]\rightarrow [1:E]$. 
By applying the same arguments as above, but ordering the candidate demand index sets as $\mathrm{W}_{\pi(1)},\dots,\mathrm{W}_{\pi(E)}$ (i.e., taking $k_j = \pi(j)$ for all ${j\in [1:E]}$) instead of $\mathrm{W}_1,\dots,\mathrm{W}_E$, we get
\begin{align}\label{eq:DC_LB_perm}
& \sum_{n=1}^N H (\mathbf{A}_{n}^{[\mathbf{W}]} | \mathbf{Q}_{n}^{[\mathbf{W}]}) \nonumber \\ 
& \quad \geq L
 \sum_{j=1}^{E} \frac{1}{N^{j-1}} \left| {\mathrm{W}_{\pi(j)}} \setminus \textstyle\bigcup_{k=0}^{j-1} {\mathrm{W}_{\pi(k)}} \right|,
\end{align} where $\pi(0)\coloneqq 0$. 

Since~\eqref{eq:DC_LB_perm} holds for all permutations $\pi$, we have
\begin{align}\label{eq:DC_LB_perm_max}
& \sum_{n=1}^N H (\mathbf{A}_{n}^{[\mathbf{W}]} | \mathbf{Q}_{n}^{[\mathbf{W}]}) \nonumber \\ 
& \quad \geq L \left(\max
 \sum_{j=1}^{E} \frac{1}{N^{j-1}} \left| {\mathrm{W}_{\pi(j)}} \setminus \textstyle\bigcup_{k=0}^{j-1} {\mathrm{W}_{\pi(k)}} \right|\right),
\end{align} where the maximization is over all permutations $\pi$. 

Combining~\eqref{eq:H_Xw} and~\eqref{eq:DC_LB_perm_max} yields the upper bound $R^{*}$ in~\eqref{eq:PSR ub}. 

\subsection{Lower Bounding the Subpacketization Level}

Consider an arbitrary balanced ${\{0,1\}}$-linear scheme that achieves rate $R_{*}$. 
Such a scheme involves retrieving a total of ${DL/R_{*}}$ linear combinations of message subpackets, and these combinations are distributed evenly across the $N$ servers. 
Consequently, the user retrieves ${DL/(NR_{*})}$ linear combinations from each server, and this quantity must be an integer.

Equivalently, the subpacketization level $L$ must be chosen so that ${DL/(NR_{*})}$ is an integer, and the smallest integer $L$ for which this holds is $L_{*}$, as defined in~\eqref{eq:L_lb}.

\section{Achievability Proofs}\label{sec:Ach_Proofs}

In this section, we prove the achievability of the retrieval-rate lower bound $R_{*}$ in Theorem~\ref{thm:PRSR capacity} and the subpacketization-level upper bound $L^{*}$ in Theorem~\ref{thm:PRSR_L*_bounds}.

\subsection{Scheme Description}

We consider balanced $\{0,1\}$-linear PSSR schemes, in which each message is partitioned into $L$ subpackets, each a randomly chosen $\mathbb{F}_q$-symbol from the message. 
The user retrieves from each server a collection of \emph{symbols}, each a $\{0,1\}$-linear combination of message subpackets. 
Accordingly, for each server $n\in[1:N]$, the user's query $\mathrm{Q}^{[\mathrm{W}]}_n$ specifies the requested symbols by listing, for each symbol, its support and the subpacket indices of every participating message; 
the server's answer $\mathrm{A}^{[\mathrm{W}]}_n$ is the corresponding list of returned symbols obtained by forming the specified $\{0,1\}$-linear combinations.

We focus on a subclass of balanced $\{0,1\}$-linear schemes that satisfies the following four structural properties, which make the schemes easier to analyze and simplify the encoding and decoding procedures:
\begin{itemize}
\item For every fixed set of messages, each server is queried for the same number of symbols, i.e., $\{0,1\}$-linear combinations, involving subpackets of those messages;
\item For every symbol retrieved from any server, each message contributes at most one of its subpackets;
\item For every subpacket and every server, that subpacket appears in at most one symbol retrieved from that server; 
\item For every demand message, the number of its recovered subpackets is the same across all servers, and consequently the number of subpackets per message, $L$, must be divisible by the number of servers, $N$.
\end{itemize}

We parametrize such schemes by integer variables $T_{\mathrm{U}}$ indexed by non-empty subsets $\mathrm{U}\subseteq [1:K]$, where $T_{\mathrm{U}}$ denotes the number of symbols per server whose support, i.e., the set of participating messages, is exactly $\mathrm{U}$. 
In particular, $T_{\{i\}}$ is the number of singletons of message $\mathrm{X}_i$,
$T_{\{i_1,i_2\}}$ is the number of two-message sums involving messages $\mathrm{X}_{i_1}$ and $\mathrm{X}_{i_2}$, and similarly for larger supports.

Within this subclass, we characterize a scheme that maximizes the achievable rate over all $L$ divisible by $N$, and further identify a scheme that achieves this maximum rate with the smallest such $L$.

Throughout the remainder of this section, we fix an arbitrary candidate demand index set $\mathrm{W}_j$ with ${j\in [E]}$ and analyze the scheme for this representative choice.

\subsection{Side--Target Pairings}

A symbol with support $\mathrm{U}\subseteq \mathrm{W}_j$ involves only subpackets of demand messages;
such a \emph{demand-only} symbol can be used directly to recover a new demand subpacket, as detailed later. 
In contrast, if $\mathrm{U}\not\subseteq \mathrm{W}_j$, the symbol includes interference subpackets and therefore cannot be used directly to recover new demand subpackets.
Instead, such symbols are incorporated via cross-server cancellations to generate new
\emph{demand-only} symbols.
Specifically, every symbol whose support is not contained in $\mathrm{W}_j$ is assigned one of two roles: \emph{side} or \emph{target}.
For any such symbol from a given server, if it is designated as \emph{side}, then ${N-1}$ \emph{target} symbols are selected from the other servers (one per server), all with a common support that contains the side support and includes no interference messages beyond those already present in the side;
subtracting the side symbol from each selected target symbol yields a new \emph{demand-only} symbol.
If it is designated as \emph{target}, then it is one of these selected target symbols: it is paired with a \emph{side} symbol from another server, and subtracting that side symbol yields the corresponding \emph{demand-only} symbol.

To track these side--target pairings, we introduce integer variables
$I^{[\mathrm{W}_{j}]}_{\mathrm{U},\mathrm{V}}$, indexed by $\mathrm{U}\not\subseteq \mathrm{W}_j$ and $\mathrm{V}\subseteq \mathrm{W}_j\setminus \mathrm{U}$, which count, per server, the number of such pairings in which a side symbol supported on $\mathrm{U}$
is subtracted from a target symbol supported on $\mathrm{U}\cup \mathrm{V}$ from another server, yielding a new demand-only symbol supported on $\mathrm{V}$.

When generating a demand-only symbol via a side--target pairing, the side symbol may include demand messages, and all paired target symbols share those same messages.
Under the constraint that each message subpacket may appear at most once among the symbols the user retrieves from a server, reusing the \emph{same} demand subpacket in both the side symbol and its paired target symbols would exhaust that subpacket and leave no server from which it can be recovered.

Accordingly, whenever a demand message appears in both the side and target supports of a pairing, its occurrence in the side symbol is assigned a subpacket recovered earlier from a target server---not previously used at the side server,
whereas its occurrences in the paired target symbols are assigned a common subpacket (shared across the target symbols) recovered earlier from the side
server---not previously used at any target server.
This allows the cancellation to proceed without consuming any as-yet-unrecovered subpacket of the underlying demand message.

\subsection{Recovery Rounds}

A single-demand symbol involves exactly one demand message, and it directly reveals a new subpacket of that message.
A multi-demand symbol involves multiple demand messages and can likewise be used to recover a new subpacket of any one of those messages. 
This requires that every other demand subpacket appearing in the symbol was previously recovered from servers other than the one providing the symbol and can therefore be used for cancellation during recovery.

To enable these successive cancellations, demand subpackets are recovered over $D$ rounds: each round ${k\in[1:D]}$ involves demand-only symbols retrieved directly that contain exactly $k$ demand messages, and side--target pairings---yielding new demand-only symbols---whose target symbols contain exactly $k$ demand messages (so the corresponding side symbols contain at most ${k-1}$ demand messages).

Tracking how multi-demand symbols are used for recovery across messages and rounds, we introduce integer variables $J^{[\mathrm{W}_{j}]}_{\mathrm{V},i}(k)$, indexed by ${\mathrm{V}\subseteq \mathrm{W}_j}$ with ${|\mathrm{V}|\geq 2}$, ${i\in\mathrm{V}}$, and ${k\in[|\mathrm{V}|:D]}$, which count, per server, the number of demand-only symbols supported on $\mathrm{V}$ that are used in round $k$ to recover a new subpacket of message $\mathrm{X}_i$.

\subsection{Subpacket Indexing}
\label{sec: Subpacket Indexing}

The variables $\{T_{\mathrm U}\}$, $\{I^{[\mathrm W_j]}_{\mathrm U,\mathrm V}\}$, and $\{J^{[\mathrm W_j]}_{\mathrm V,i}(k)\}$ determine \emph{how many} symbols of each support are retrieved and how they are used for recovery, but they do not yet specify \emph{which} subpackets appear in each symbol. 

To complete the scheme description, we now describe the indexing of the message subpackets appearing in the symbols the user retrieves from each server. 

Subpacket indices are assigned over $D$ rounds, following the recovery process for all symbols involved in recovery; 
any remaining symbols are indexed afterward.

\subsubsection*{Round~1}
In Round~1, subpacket indices are assigned for all symbols containing at most one demand message, i.e., those with
supports $\mathrm{U}$ satisfying ${|\mathrm{U}\cap \mathrm{W}_j|\leq 1}$.
This includes:
(i) all singleton symbols, and
(ii) all side--target pairings whose targets contain exactly one demand message.

\begin{enumerate}
\item \emph{Singleton symbols.}
For each message $\mathrm X_i$, distinct unused subpacket indices are assigned to the $T_{\{i\}}$ singleton symbols of $\mathrm X_i$ from each server.
\item \emph{Side--target pairings whose targets contain one demand.}
Fix a pair ${(\mathrm{U},i)}$ with ${\mathrm{U}\subseteq [1:K]\setminus \mathrm{W}_j}$ and ${i\in \mathrm{W}_j}$.
For each side--target pairing with side support $\mathrm{U}$ and target support ${\mathrm{U}\cup\{i\}}$ used to recover ${N-1}$ subpackets of the message $\mathrm{X}_i$, subpacket indices are assigned as follows: 
\begin{itemize}
\item In the side symbol, all messages are assigned subpacket indices not previously used at any server;
\item In each target symbol, all messages in $\mathrm{U}$ are assigned the \emph{same} subpacket indices as in the side symbol, while the message $\mathrm{X}_i$ is assigned a subpacket index not previously used at any server, chosen distinctly across the target symbols.
\end{itemize}
\end{enumerate}

\subsubsection*{Round~$k$ (${k\in[2:D]}$)}
In Round~$k$, subpacket indices are assigned for all symbols used for recovery in that round. 
This includes:
(i) directly retrieved demand-only symbols containing exactly $k$ demand messages, and
(ii) side--target pairings whose targets contain exactly $k$ demand messages.

\begin{enumerate}
\item \emph{Directly retrieved demand-only symbols with $k$ demands.}
Fix a pair ${(\mathrm{V},i)}$ with ${\mathrm{V}\subseteq \mathrm{W}_j}$, ${|\mathrm{V}|=k}$, and ${i\in \mathrm{V}}$.
For each symbol supported on $\mathrm{V}$ from a fixed server that is used in Round~$k$ to recover a subpacket of
the message $\mathrm{X}_i$, subpacket indices are assigned as follows:
\begin{itemize}
\item The message $\mathrm{X}_i$ is assigned a subpacket index not previously used at any server.
\item For each message in ${\mathrm{V}\setminus\{i\}}$, a subpacket index not previously used at the given server is assigned; 
this index corresponds to a subpacket recovered in an earlier round from a different server.
\end{itemize}

\item \emph{Side--target pairings whose targets contain $k$ demands.}
Fix a triple ${(\mathrm{U},\mathrm{V},i)}$ with ${\mathrm{U}\not\subseteq \mathrm{W}_j}$, ${\mathrm{V}\subseteq \mathrm{W}_j\setminus \mathrm{U}}$, ${|(\mathrm{U}\cup \mathrm{V})\cap \mathrm{W}_j|=k}$, and ${i\in \mathrm{V}}$. 
For each side--target pairing with side support $\mathrm{U}$ and target support ${\mathrm{U}\cup \mathrm{V}}$ used in Round~$k$ to recover ${N-1}$ subpackets of the message $\mathrm{X}_i$, subpacket indices are assigned as follows:
\begin{itemize}
\item In the side symbol,
\begin{itemize}
\item For each message in ${\mathrm{U}\setminus \mathrm{W}_j}$, a subpacket index not previously used at any server is assigned.
\item For each message in ${\mathrm{U}\cap \mathrm{W}_j}$, a subpacket index not previously used at the side server is assigned; 
this index corresponds to a subpacket recovered in an earlier round from a target server.
\end{itemize}
\item In each target symbol, 
\begin{itemize}
\item For each message in ${\mathrm{U}\setminus \mathrm{W}_j}$, the same subpacket indices as in the side symbol are assigned.
\item For each message in ${\mathrm{U}\cap \mathrm{W}_j}$, indices corresponding to subpackets already recovered from the side server are assigned; 
these indices are shared across the target symbols and were not previously used at any target server.
\item The message $\mathrm{X}_i$ is assigned a subpacket index not previously used at any server, chosen distinctly across the target symbols.
\item For each message in ${\mathrm{V}\setminus \{i\}}$, a subpacket index not previously used at the target server is assigned; 
this index corresponds to a subpacket recovered in an earlier round from a different server.
\end{itemize}
\end{itemize}
\end{enumerate}

After subpacket indices have been assigned to all symbols involved in Rounds~$1$ through~$D$, any remaining symbols at each server, namely those not used for recovery, are indexed arbitrarily using indices not previously used at that server.

\subsection{Feasibility Constraints}

The variables $T_{\mathrm{U}}$, $I^{[\mathrm{W}_j]}_{\mathrm{U},\mathrm{V}}$, and $J^{[\mathrm{W}_j]}_{\mathrm{V},i}(k)$ represent symbol counts and are therefore non-negative integers, yielding  constraints~\eqref{eq:T_non_neg}, \eqref{eq:I_non_neg}, and~\eqref{eq:J_non_neg}.

Beyond integrality and non-negativity, these variables must satisfy additional constraints imposed by correctness and privacy for the proposed construction to yield a valid PSSR scheme.
In what follows, we derive these constraints.
   
Recall that for every support $\mathrm{U}\not\subseteq \mathrm{W}_j$, some symbols supported on $\mathrm{U}$ are designated as
\emph{side} symbols and some as \emph{target} symbols.
To ensure correctness, these role assignments must be feasible: at each server, the total number of symbols of support $\mathrm{U}$ used in either role cannot exceed the total number of retrieved symbols of that support, $T_{\mathrm{U}}$.

Fix a support set $\mathrm{U}\not\subseteq \mathrm{W}_j$.  
Per server, the number of symbols supported on $\mathrm{U}$ used as side for target symbols with supports ${\mathrm{U}\cup \mathrm{V}}$, over all non-empty $\mathrm{V}\subseteq \mathrm{W}_j\setminus \mathrm{U}$, equals 
\[\sum_{\mathrm{V} \in \mathbbm{V}_j(\mathrm{U})} I^{[\mathrm{W}_{j}]}_{\mathrm{U},\mathrm{V}},\] 
where $\mathbbm{V}_j(\mathrm{U}) = \{\mathrm{V}\subseteq \mathrm{W}_j\setminus \mathrm{U}: |\mathrm{V}| \neq 0\}$. 

Moreover, a side symbol from one server supported on $\mathrm{U}\setminus \mathrm{V}$, for any non-empty ${\mathrm{V}\subseteq \mathrm{W}_j\cap \mathrm{U}}$, can be paired with ${N-1}$ target symbols---one from each of the other ${N-1}$ servers---supported on $\mathrm{U}$.  
Accordingly, per server, the number of symbols supported on $\mathrm{U}$ used as target equals 
\[(N-1)\sum_{\mathrm{V} \in \mathbbm{V}_j([1:K]\setminus \mathrm{U})} I^{[\mathrm{W}_{j}]}_{\mathrm{U} \setminus \mathrm{V},\mathrm{V}},\] 
where $\mathbbm{V}_j([1:K]\setminus \mathrm{U}) = \{\mathrm{V}\subseteq \mathrm{W}_j\cap \mathrm{U}: |\mathrm{V}|\neq 0\}$. 

Since the user retrieves $T_{\mathrm{U}}$ symbols with support $\mathrm{U}$ from each server, the combined number of such symbols used as side and as target must not exceed $T_{\mathrm{U}}$. 
This yields constraint~\eqref{eq:set_1_cons}.

Correctness requires the user to recover exactly $L/N$ subpackets of each demand message $\mathrm{X}_i$, for every ${i\in\mathrm{W}_j}$.  
Per server, these subpackets come from: 
\begin{itemize}
\item Singleton symbols of  $\mathrm{X}_i$ retrieved directly, contributing $T_{\{i\}}$ subpackets; 
\item Single-demand symbols supported on $\{i\}$ generated via side--target pairings, where a target symbol supported on $\mathrm{U}\cup\{i\}$, where $\mathrm{U}\subseteq [1:K]\setminus \{i\}$ and $\mathrm{U}\not\subseteq\mathrm{W}_j$, is paired with a side symbol supported on $\mathrm{U}$, contributing the following number of subpackets:
\[(N-1)\sum_{\substack{\mathrm{U} \in \mathbbm{U}_j(\{i\})}}
I^{[\mathrm{W}_{j}]}_{\mathrm{U},\{i\}},\] 
where $\mathbbm{U}_j(\{i\}) = \{\mathrm{U}\subseteq [1:K]\setminus \{i\}: \mathrm{U}\not\subseteq \mathrm{W}_j\}$;  
\item Multi-demand symbols supported on ${\mathrm{V}\subseteq \mathrm{W}_j}$ with ${2\leq |\mathrm{V}|\leq k}$ and ${i\in\mathrm{V}}$ which are used in round $k$ (for ${k\in [2:D]}$) to recover new subpackets of $\mathrm{X}_i$, contributing the following number of subpackets: 
\[ \sum_{k=2}^{D} \sum_{l=1}^{k-1} \sum_{\substack{\mathrm{V}\in \mathbbm{V}_{j,l}(\{i\})}}    J^{[\mathrm{W}_{j}]}_{\mathrm{V}\cup \{i\},i} (k),\] 
where $\mathbbm{V}_{j,l}(\{i\}) = \{\mathrm{V}\subseteq \mathrm{W}_j\setminus \{i\}: |\mathrm{V}| = l\}$.  
\end{itemize} 
Equating the total contribution to $L/N$ yields  constraint~\eqref{eq:set_2_cons}.

Correctness also requires that, whenever the scheme uses demand-only symbols to recover new demand subpackets, there must be enough such symbols available at each server.
For any ${\mathrm{V}\subseteq \mathrm{W}_j}$, ${|\mathrm{V}|\geq 2}$, the demand-only symbols supported on $\mathrm{V}$ that are available at a given server consist of:
\begin{itemize}
\item the $T_{\mathrm{V}}$ symbols with support $\mathrm{V}$ retrieved directly; 
\item the symbols generated via side--target pairings, where a target symbol supported on $\mathrm{U}\cup \mathrm{V}$ with $\mathrm{U}\subseteq [1:K]\setminus \mathrm{V}$ and $\mathrm{U}\not\subseteq \mathrm{W}_j$ is paired with a side symbol supported on $\mathrm{U}$, contributing the following number of symbols with support $\mathrm{V}$: 
\[(N-1)\sum_{\substack{\mathrm{U} \in \mathbbm{U}_j(\mathrm{V})}}
I^{[\mathrm{W}_{j}]}_{\mathrm{U},\mathrm{V}},\] 
where $\mathbbm{U}_j(\mathrm{V}) = \{\mathrm{U}\subseteq [1:K]\setminus \mathrm{V}: \mathrm{U}\not\subseteq \mathrm{W}_j\}$.
\end{itemize}
These available symbols must cover their total usage across all rounds $k\in[|\mathrm{V}|:D]$ and all demand indices ${i\in \mathrm{V}}$, i.e., 
\[\sum_{k=|\mathrm{V}|}^{D} \sum_{i\in \mathrm{V}}J^{[\mathrm{W}_{j}]}_{\mathrm{V},i}(k).\]
This imposes  constraint~\eqref{eq:set_3_cons}.

For correctness, subpacket indices must be assigned so that, in each round and for each server, every demand subpacket used for cancellation has already been recovered from other servers in earlier rounds.
In particular, already recovered demand subpackets are needed for two types of cancellations:
\begin{itemize}
\item[(i)] canceling demand messages that appear in both the side and target symbols within a side--target pairing; 
\item[(ii)] canceling the remaining demand subpackets in a multi-demand symbol to recover a new demand subpacket.
\end{itemize}

Fix a demand index ${i\in \mathrm{W}_j}$. For any given server and any ${m\in[1:D-1]}$, the number of subpackets of the demand message $\mathrm{X}_i$ that are needed from the other servers by the end of round $m$---for use in round $m+1$ in the canceling operations in (i) and (ii)---cannot exceed the number of subpackets of $\mathrm{X}_i$ that are recovered from the other servers by the end of round $m$. 
This requirement is imposed by  constraint~\eqref{eq:indexing_cons_1}.

Specifically, the number of occurrences of the demand message $\mathrm{X}_i$ in both the side and target symbols within side--target pairings equals, per server and for each round ${k\in [2:D]}$, 
\begin{align}
    N \sum_{l=1}^{k-1} \sum_{\substack{\mathrm{V} \in \mathbbm{V}_{j,l}(\{i\})}} \sum_{\substack{\mathrm{U} \in \mathbbm{U}_{j,k-1-l}(\mathrm{V}\cup \{i\}) }}   I^{[\mathrm{W}_{j}]}_{\mathrm{U} \cup \{i\},\mathrm{V}}, \label{eq:indexing_cons_1_RHS_1} 
\end{align} 
where $\mathbbm{V}_{j,l}(\{i\})$ is defined as before, and  
\begin{align}
\mathbbm{U}_{j,k-1-l}(\mathrm{V}\cup \{i\})
&=
\bigl\{
\mathrm{U}\subseteq [1:K]\setminus (\mathrm{V}\cup \{i\}): \nonumber\\
&\qquad
\mathrm{U}\not\subseteq \mathrm{W}_j, \, 
|\mathrm{U}\cap \mathrm{W}_j|=k-1-l
\bigr\}.\nonumber
\end{align}
This expression counts (i) the number of occurrences of $\mathrm{X}_i$ in the \emph{side} symbols from a given server, and (ii) the corresponding number of \emph{target} symbols from that server, which in turn equals the number of occurrences of $\mathrm{X}_i$ in the \emph{side} symbols from the other ${N-1}$ servers paired with those target symbols. 
The sums range over all demand subsets ${\mathrm{V}\subseteq \mathrm{W}_j\setminus\{i\}}$ with ${|\mathrm{V}|=l}$ for ${l\in[1:k-1]}$, and over all subsets ${\mathrm{U}\subseteq [1:K]\setminus(\mathrm{V}\cup\{i\})}$ that include at least one interference message (i.e., ${\mathrm{U}\not\subseteq \mathrm{W}_j}$) and contribute exactly ${k-1-l}$ additional demand messages (i.e., ${|\mathrm{U}\cap \mathrm{W}_j|=k-1-l}$). These constraints ensure that the corresponding target support contains exactly $k$ demand messages---matching the round number
$k$---since ${|\mathrm{V}|+|\mathrm{U}\cap \mathrm{W}_j|+|\{i\}|=l+(k-1-l)+1=k}$.

In addition, the number of occurrences of the demand message $\mathrm{X}_i$ within multi-demand symbols used to recover new subpackets of other demand messages equals, per server and for each round ${k\in [2:D]}$,  
\begin{equation} \label{eq:indexing_cons_1_RHS_2}
    \sum_{l=1}^{k-1} \sum_{\substack{\mathrm{V} \in \mathbbm{V}_{j,l}(\{i\})}} \sum_{i' \in \mathrm{V} }  J^{[\mathrm{W}_{j}]}_{\mathrm{V} \cup \{i\},i'} (k),
\end{equation}
where $\mathbbm{V}_{j,l}(\{i\})$ is defined as before.
Here, the sums range over all demand subsets ${\mathrm{V}\subseteq \mathrm{W}_j\setminus \{i\}}$ with ${|\mathrm{V}| = l}$ for ${l\in [1:k-1]}$, and over all demand indices ${i'\in \mathrm{V}}$. 
These constraints ensure that the corresponding multi-demand symbols in round $k$ involve at most $k$ demand messages,
since ${|\mathrm{V}|+|\{i\}|=l+1\leq k}$.

Combining these, for any given server and any ${m\in [1:D-1]}$, the number of subpackets of the demand message $\mathrm{X}_i$ that must be available from the other servers by the end of round $m$ equals the sum of~\eqref{eq:indexing_cons_1_RHS_1} and~\eqref{eq:indexing_cons_1_RHS_2} over  $k\in [2:m+1]$, which matches the RHS of  constraint~\eqref{eq:indexing_cons_1}. 
Moreover, for any given server and any ${m\in [1:D-1]}$, the subpackets of the demand message $\mathrm{X}_i$ recovered from the other servers by the end of round $m$ come from: 
\begin{itemize}
\item the singleton symbols of  $\mathrm{X}_i$ retrieved directly, contributing ${(N-1)T_{\{i\}}}$ subpackets; 
\item the single-demand symbols supported on $\{i\}$ generated via side--target pairings, where a target symbol supported on ${\mathrm{U}\cup\{i\}}$, where ${\mathrm{U}\subseteq [1:K]\setminus \{i\}}$, ${\mathrm{U}\not\subseteq\mathrm{W}_j}$, and ${|\mathrm{U}\cap\mathrm{W}_j|\leq m-1}$, is paired with a side symbol supported on $\mathrm{U}$, contributing the following number of subpackets: 
\[(N-1)^2 \sum_{k=1}^{m}\ \sum_{\substack{\mathrm{U} \in \mathbbm{U}_{j,k-1}(\{i\})}}
I^{[\mathrm{W}_{j}]}_{\mathrm{U},\{i\}},\] 
where 
\begin{align}
\mathbbm{U}_{j,k-1}(\{i\})
&=
\bigl\{
\mathrm{U}\subseteq [1:K]\setminus \{i\}: \nonumber\\
&\qquad
\mathrm{U}\not\subseteq \mathrm{W}_j, \, 
|\mathrm{U}\cap \mathrm{W}_j| = k-1
\bigr\};\nonumber
\end{align}  
\item the multi-demand symbols supported on ${\mathrm{V}\subseteq \mathrm{W}_j}$ with ${2\leq |\mathrm{V}|\leq k}$ and ${i\in\mathrm{V}}$ which are used in round $k$ (for ${k\in [2:m]}$) to recover new subpackets of $\mathrm{X}_i$, contributing the following number of subpackets: 
\[(N-1)\sum_{k=2}^{m} \sum_{l=1}^{k-1} \sum_{\substack{\mathrm{V} \in \mathbbm{V}_{j,l}(\{i\})}}  J^{[\mathrm{W}_{j}]}_{\mathrm{V} \cup \{i\},i} (k),\] 
where $\mathbbm{V}_{j,l}(\{i\})$ is defined as before.  
\end{itemize}
Summing these three contributions yields the LHS of  constraint~\eqref{eq:indexing_cons_1}. 

Finally, correctness also requires per-server non-reuse of subpacket indices: 
among the symbols the user retrieves from any given server, a fixed message subpacket may not appear more than once.
Equivalently, for each message $\mathrm{X}_i$, the total number of symbols in which $\mathrm{X}_i$ appears at a given server,  
\[\sum_{\substack{\mathrm{U}\subseteq[1:K]\\ i\in \mathrm{U}}} T_{\mathrm{U}},\] 
cannot exceed the number of subpackets per message, $L$, yielding constraint~\eqref{eq:indexing_cons_2}.

With these feasibility constraints in place, privacy follows by construction and imposes no additional constraints.
Fix any server.
For every support $\mathrm{U}$, the user's query specifies the total number of symbols of that support;
these per-support counts are given by $\{T_{\mathrm{U}}\}$ and are identical for all candidate demand index sets.
Therefore, at this server, the number of occurrences of each message across all symbols is fully 
determined by $\{T_{\mathrm{U}}\}$ and is independent of the demand index set.

Given these supports, subpacket indices are assigned subject to per-server non-reuse constraints, so that each message contributes,
at that server, a fixed number of \emph{distinct} subpackets determined by $\{T_{\mathrm{U}}\}$ and hence the same for all candidate demand index sets.
Since each message is randomly partitioned into subpackets, the resulting indices appear as uniformly random labels to
the server and therefore reveal no information about the demand index set.
Thus, the scheme satisfies privacy.

\subsection{Maximizing the Achievable Rate}

For any $L$ divisible by $N$, every assignment of the variables $T_{\mathrm{U}}$ and the auxiliary variables $I^{[\mathrm{W}_j]}_{\mathrm{U},\mathrm{V}}$ and $J^{[\mathrm{W}_j]}_{\mathrm{V},i}(k)$ that satisfies the constraints~\eqref{eq:set_1_cons}--\eqref{eq:J_non_neg} specifies a valid PSSR scheme.

A natural question is whether these constraints admit any feasible assignment.
This is indeed the case, and feasibility is immediate: 
set ${T_{\{i\}}=L/N}$ for all ${i\in[1:K]}$,
set ${T_{\mathrm{U}}=0}$ for all ${\mathrm{U}\subseteq[1:K]}$ with ${|\mathrm{U}|\geq 2}$,
and set every auxiliary variable
$I^{[\mathrm{W}_j]}_{\mathrm{U},\mathrm{V}}$ and
$J^{[\mathrm{W}_j]}_{\mathrm{V},i}(k)$ to zero. 
It is straightforward to verify that this assignment satisfies the constraints
\eqref{eq:set_1_cons}--\eqref{eq:J_non_neg} and corresponds to the naive scheme in which the user retrieves from each server exactly $L/N$ distinct subpackets from every message. 
While this scheme is feasible for any arbitrary demand family, it may not be rate-optimal, which motivates the optimization framework presented next.

For any feasible assignment, the user retrieves from each server a total of 
\[
T \coloneqq \sum_{\substack{\mathrm{U}\subseteq [1:K]}} T_{\mathrm{U}}
\]
$\mathbb{F}_q$-symbols. 
Across all $N$ servers the user retrieves $NT$ $\mathbb{F}_q$-symbols to recover $DL$ demand subpackets, each an
$\mathbb{F}_q$-symbol. 
Thus, the retrieval rate, as defined in~\eqref{eq:RateDef}, equals $DL/(NT)$. 

For a fixed $L$, maximizing the retrieval rate
is equivalent to minimizing $T$ subject to~\eqref{eq:set_1_cons}--\eqref{eq:J_non_neg}, which is an integer linear program (ILP).
Allowing $L$ to vary, we can therefore maximize the achievable rate over all $L$; 
equivalently, we can minimize the ratio $T/L$, subject to \eqref{eq:set_1_cons}--\eqref{eq:J_non_neg} and the additional
integrality and positivity constraint on $L$ in~\eqref{eq:L_geq_1}. 

While the constraints are linear in the variables $T_{\mathrm{U}}$, $I^{[\mathrm{W}_j]}_{\mathrm{U},\mathrm{V}}$, $J^{[\mathrm{W}_j]}_{\mathrm{V},i}(k)$, and $L$, the objective is fractional.
Normalizing all variables by $L$ yields an equivalent formulation in which both the objective and the constraints are linear in the normalized variables.
Consequently, the problem admits a linear programming (LP) reformulation.
Solving this LP provides an optimal assignment of the normalized variables; we then choose $L$ as the smallest positive integer that makes the corresponding unnormalized variables integral.

\subsection{Minimizing the Subpacketization Level}

The LP above characterizes the maximum achievable rate within the underlying
class of schemes. 
However, this rate can in general be attained with different subpacketization levels, and the LP realization need not use the smallest possible $L$.

To identify a maximum-rate scheme with minimum $L$, we introduce an additional
constraint, as in~\eqref{eq:min_L_cons}, which requires the scheme's retrieval rate
$DL/(NT)$ to match the LP-optimal value, and then minimize $L$ subject to~\eqref{eq:set_1_cons}--\eqref{eq:L_geq_1} and~\eqref{eq:min_L_cons}.
Since $L$ and all counting variables are integer-valued, the resulting formulation is an ILP.

\section{An Illustrative Example}

In this section, we present an illustrative example of the proposed PSSR scheme.  

Suppose there are $N=2$ servers, each storing an identical copy of
$K=5$ messages. 
We index the messages by $1,2,3,4,5$ and, for notational convenience, denote them by $a,b,c,d,e$, respectively.

Suppose a user wishes to retrieve ${D=2}$ messages, indexed by one of the following ${E=4}$ candidate demand index sets:
\[\mathrm{W}_1 = \{1,3\},\; \mathrm{W}_2 = \{2,3\},\; \mathrm{W}_3 = \{3,4\},\; \mathrm{W}_4 = \{4,5\}.\]

For this example, solving the optimization problem in Theorem~\ref{thm:PRSR capacity} yields the achievable rate ${R_{*} = 8/13}$. 
With $R_{*}$ fixed, the optimization problem in Theorem~\ref{thm:PRSR_L*_bounds} yields the achievable subpacketization level ${L^{*}=8}$ and the following values of $T_{\mathrm{U}}$ for all non-empty subsets ${\mathrm{U}\subseteq[1:5]}$:
\[
T_{\{2\}}=1,\; T_{\{3\}}=2,\; T_{\{4\}}=1,\; T_{\{5\}}=2,
\]
\[
T_{\{1,3\}}=1,\; T_{\{2,4\}}=1,\; T_{\{3,5\}}=2,
\]
\[
T_{\{1,2,3\}}=1,\; T_{\{1,3,4\}}=1,
\]
\[
T_{\{1,2,3,4\}}=1,
\]
and $T_{\mathrm{U}}=0$ for all other $\mathrm{U}$.
This means that, for any $\mathrm{W}_j$ with $j \in [1:4]$, the user retrieves from each server one singleton symbol from each of $b$ and $d$, two singleton symbols from each of $c$ and $e$, one symbol (linear combination) involving subpackets from $a$ and $c$, one involving subpackets from $b$ and $d$, one involving subpackets from $a$, $b$, and $c$, one involving subpackets from $a$, $c$, and $d$, one involving subpackets from $a$, $b$, $c$, and $d$, and two symbols involving subpackets from $c$ and $e$.

Suppose each message is randomly and independently divided into ${L^{*}=8}$ subpackets, denoted by $a_1, \dots, a_8$, $b_1, \dots, b_8$, $c_1, \dots, c_8$, $d_1, \dots, d_8$, and $e_1, \dots, e_8$. 

{\renewcommand{\arraystretch}{1.105}
\begin{table}[t]
    \centering
    \caption{Query table for the case  $\mathrm{W}=\mathrm{W}_1 = \{1,3\}$\\ (demand messages: $a$ and $c$) }\label{tab:PSSR_W1}
    \scalebox{1.125}{
    \begin{tabular}{|c|c|}
    \hline
     Server 1 & Server 2\\
    \hline
    \hline
     $b_{1}$ & $b_{2}$ \\
      $c_{1}$, $c_{2}$ & $c_{3}$, $c_{4}$ \\
     $d_{1}$  & $d_{2}$ \\
     $e_{1}$, $e_{2}$ & $e_{3}$, $e_{4}$ \\
    \hline
    \hline
     $a_{1}+c_{3}$ & $a_{2}+c_{1}$\\
     $b_{3}+d_{3}$ & $b_{4}+d_{4}$ \\
     $c_{5}+e_{3}$, $c_{6}+e_{4}$ & $c_{7}+e_{1}$, $c_{8}+e_{2}$ \\
    \hline
    \hline
     $a_{3}+b_{2}+c_{4}$ & $a_{4}+b_{1}+c_{2}$ \\
     $a_{5}+c_{7}+d_{2}$ & $a_{6}+c_{5}+d_{1}$\\
    \hline
    \hline
     $a_{7}+b_{4}+c_{8}+d_{4}$ & $a_{8}+b_{3}+c_{6}+d_{3}$\\
    \hline
    \end{tabular}
    }
\end{table}
}

{\renewcommand{\arraystretch}{1.105}
\begin{table}[t]
    \centering
    \caption{Query table for the case  $\mathrm{W}=\mathrm{W}_2 = \{2,3\}$\\ (demand messages: $b$ and $c$) }\label{tab:PSSR_W2}
    \scalebox{1.125}{
    \begin{tabular}{|c|c|}
    \hline
     Server 1 & Server 2\\
    \hline
    \hline
     $b_{1}$ & $b_{2}$ \\
      $c_{1}$, $c_{2}$ & $c_{3}$, $c_{4}$ \\
     $d_{1}$  & $d_{2}$ \\
     $e_{1}$, $e_{2}$ & $e_{3}$, $e_{4}$  \\
    \hline
    \hline
     $a_{1}+c_{3}$ & $a_{2}+c_{1}$\\
     $b_{3}+d_{2}$ & $b_{4}+d_{1}$ \\
     $c_{5}+e_{3}$, $c_{6}+e_{4}$ & $c_{7}+e_{1}$, $c_{8}+e_{2}$ \\
     \hline
     \hline
     $a_{2}+b_{5}+c_{4}$ & $a_{1}+b_{6}+c_{2}$ \\
     $a_{3}+c_{7}+d_{3}$ & $a_{4}+c_{5}+d_{4}$\\
    \hline
    \hline
     $a_{4}+b_{7}+c_{8}+d_{4}$ & $a_{3}+b_{8}+c_{6}+d_{3}$\\
    \hline
    \end{tabular}
    }
\end{table}
}

{\renewcommand{\arraystretch}{1.105}
\begin{table}[t]
    \centering
    \caption{Query table for the case  $\mathrm{W}=\mathrm{W}_3 = \{3,4\}$\\ (demand messages: $c$ and $d$) }\label{tab:PSSR_W3}
    \scalebox{1.125}{
    \begin{tabular}{|c|c|}
    \hline
     Server 1 & Server 2\\
    \hline
    \hline
     $b_{1}$ & $b_{2}$ \\
      $c_{1}$, $c_{2}$ & $c_{3}$, $c_{4}$ \\
     $d_{1}$  & $d_{2}$ \\
     $e_{1}$, $e_{2}$ & $e_{3}$, $e_{4}$ \\
    \hline
    \hline
     $a_{1}+c_{3}$ & $a_{2}+c_{1}$\\
     $b_{2}+d_{3}$ & $b_{1}+d_{4}$ \\
     $c_{5}+e_{3}$, $c_{6}+e_{4}$ & $c_{7}+e_{1}$, $c_{8}+e_{2}$ \\
    \hline
    \hline
     $a_{3}+b_{3}+c_{7}$ & $a_{4}+b_{4}+c_{5}$ \\
     $a_{2}+c_{4}+d_{5}$ & $a_{1}+c_{2}+d_{6}$\\
    \hline
    \hline
     $a_{4}+b_{4}+c_{8}+d_{7}$ & $a_{3}+b_{3}+c_{6}+d_{8}$\\
    \hline
    \end{tabular}
    }
\end{table}
}

{\renewcommand{\arraystretch}{1.105}
\begin{table}[t]
    \centering
    \caption{Query table for the case  $\mathrm{W}=\mathrm{W}_4=\{4,5\}$\\ (demand messages: $d$ and $e$) }\label{tab:PSSR_W4}
    \scalebox{1.125}{
    \begin{tabular}{|c|c|}
    \hline
     Server 1 & Server 2\\
    \hline
    \hline
     $b_{1}$ & $b_{2}$ \\
      $c_{1}$, $c_{2}$ & $c_{3}$, $c_{4}$ \\
     $d_{1}$  & $d_{2}$ \\
     $e_{1}$, $e_{2}$ & $e_{3}$, $e_{4}$ \\
    \hline
    \hline
     $a_{1}+c_{5}$ & $a_{2}+c_{6}$\\
     $b_{2}+d_{3}$ & $b_{1}+d_{4}$ \\
     $c_{3}+e_{5}$, $c_{4}+e_{6}$ & $c_{1}+e_{7}$, $c_{2}+e_{8}$ \\
    \hline
    \hline
     $a_{3}+b_{3}+c_{7}$ & $a_{4}+b_{4}+c_{8}$ \\
     $a_{2}+c_{6}+d_{5}$ & $a_{1}+c_{5}+d_{6}$\\
    \hline
    \hline
     $a_{4}+b_{4}+c_{8}+d_{7}$ & $a_{3}+b_{3}+c_{7}+d_{8}$\\
    \hline
    \end{tabular}
    }
\end{table}
}

Table~\ref{tab:PSSR_W1} lists the query sent to each server when the user's demand index set is $\mathrm{W}_1$, that is, when the demand messages are $a$ and $c$. 
Similarly, Tables~\ref{tab:PSSR_W2},~\ref{tab:PSSR_W3}, and~\ref{tab:PSSR_W4} list the queries corresponding to the demand index sets $\mathrm{W}_2$, $\mathrm{W}_3$, and $\mathrm{W}_4$, respectively. 
In each table, the subpacket indices are assigned according to the procedure described in Section~\ref{sec: Subpacket Indexing}. 
Below, Tables~\ref{tab:PSSR_W1} and~\ref{tab:PSSR_W3} are discussed in detail, with the subpacket indexing explained together with the corresponding recovery process to establish correctness. 
The proofs for Tables~\ref{tab:PSSR_W2} and~\ref{tab:PSSR_W4} follow in the same way.

\subsection{Proof of Correctness}

Consider Table~\ref{tab:PSSR_W1}, which corresponds to the demand index set $\mathrm{W}_1 = \{1,3\}$, i.e., the demand messages are $a$ and $c$. 
An optimal solution to the optimization problem yields the following values of $I^{[\mathrm W_1]}_{\mathrm U,\mathrm V}$ for all pairs ${(\mathrm{U},\mathrm{V})}$ with ${\mathrm{U}\subseteq[1:5]}$, ${\mathrm{U}\not\subseteq \mathrm{W}_1}$, and ${\mathrm{V}\subseteq \mathrm{W}_1\setminus \mathrm{U}}$: 
\[
I^{[\mathrm{W}_{1}]}_{\{5\},\{3\}}=2\;,\;
I^{[\mathrm{W}_{1}]}_{\{2\},\{1,3\}}=1\;,\;
I^{[\mathrm{W}_{1}]}_{\{4\},\{1,3\}}=1\;,\;
I^{[\mathrm{W}_{1}]}_{\{2,4\},\{1,3\}}=1,
\] 
and $I^{[\mathrm W_1]}_{\mathrm{U},\mathrm{V}}=0$ for all other pairs ${(\mathrm{U},\mathrm{V})}$. 
Additionally, it yields the following values of $J^{[\mathrm W_1]}_{\mathrm V,i}(k)$ for all pairs ${(\mathrm{V},i)}$ and $k$ with ${\mathrm{V} \subseteq \mathrm W_1}$, ${|\mathrm{V}|\geq 2}$, $i \in \mathrm{V}$, and ${k \in [|\mathrm{V}|:2]}$, which here reduces to ${(\mathrm{V},i)\in \{(\{1,3\},1),(\{1,3\},3)\}}$ and ${k=2}$:
\[
J^{[\mathrm{W}_{1}]}_{\{1,3\},1} (2)=4\;,\; J^{[\mathrm{W}_{1}]}_{\{1,3\},3} (2)=0.
\]

Subpacket indexing and the recovery process take place over ${D=2}$ rounds. 
We begin with Round~1.
Assigning distinct subpacket indices to the retrieved singleton symbols of each message from each server, the user retrieves $b_1$, $c_1$, $c_2$, $d_1$, $e_1$, and $e_2$ from Server~1 and $b_2$, $c_3$, $c_4$, $d_2$, $e_3$, and $e_4$ from Server~2.
Thus, the user directly recovers $c_1$, $c_2$, $c_3$, and $c_4$. 

Next, consider side--target pairings in which the target contains exactly one demand message. 
Since $I^{[\mathrm{W}_{1}]}_{\{5\},\{3\}}=2$, where $\{5\}$ corresponds to the message $e$ and $\{3\}$ corresponds to the message $c$, the two side symbols $e_3$ and $e_4$ retrieved from Server~2 must be combined with the two target symbols from Server~1 involving the messages $c$ and $e$.
Thus, the two subpackets of $e$ appearing in these two symbols from Server~1 are assigned the indices $3$ and $4$, while two not-previously-used indices, $5$ and $6$, are assigned to the subpackets of the demand message $c$. 
By combining the two side symbols $e_3$ and $e_4$ from Server~2 with their corresponding target symbols from Server~1, the user recovers $c_5$ and $c_6$.
Likewise, the two target symbols from Server~2 involving the messages $c$ and $e$ are indexed in the same way, yielding the symbols $c_7+e_1$ and $c_8+e_2$, from which the user recovers $c_7$ and $c_8$.
This completes the subpacket indexing and the recovery process in Round~1.

Next, we describe the subpacket indexing and the recovery process in Round~2. 
This round consists of (i) demand-only symbols that are retrieved directly and contain both $a$ and $c$, and (ii) side--target pairings whose targets involve both $a$ and $c$.
Since ${J^{[\mathrm{W}_{1}]}_{\{1,3\},1} (2)=4}$, there are four demand-only symbols (either directly retrieved or obtained by subtracting a side symbol from its corresponding target symbol) in Round~2 that enable recovery of four subpackets of $a$ from each server.

Start with the directly retrieved demand-only symbol from Server~1 involving $a$ and $c$. 
Since this symbol must enable the recovery of a subpacket of $a$, the message $a$ is assigned index $1$, while the message $c$ is assigned index $3$ since $c_3$ was recovered from Server~2 in Round~1.
Likewise, in the corresponding demand-only symbol from Server~2, $a$ and $c$ are assigned indices $2$ and $1$, respectively.
Thus, the user recovers $a_1$ and $a_2$.

We next consider side--target pairings whose targets involve both $a$ and $c$.
Since ${I^{[\mathrm{W}_{1}]}_{\{2\},\{1,3\}}=1}$, the side symbol $b_2$ retrieved from Server~2 is paired with the target symbol from Server~1 involving $a$, $b$, and $c$. 
Thus, in this target symbol, $b$ is assigned index $2$, $c$ is assigned index $4$ since $c_4$ was previously recovered from Server~2, and $a$ is assigned index $3$ since $a_3$ was not previously used at any server.
Similarly, in the corresponding target symbol retrieved from Server~2, $a$, $b$, and $c$ are assigned indices $4$, $1$, and $2$, respectively. 
Thus, the user recovers $a_3$ and $a_4$.

Since ${I^{[\mathrm{W}_{1}]}_{\{4\},\{1,3\}}=1}$, the side symbol $d_2$ from Server~2 is paired with the target symbol from Server~1 involving $a$, $c$, and $d$.
Thus, in this target symbol, $d$ is assigned index $2$, $c$ is assigned index $7$ since $c_7$ was previously recovered from Server~2, and $a$ is assigned index $5$ since $a_5$ was not used previously at any server. 
Similarly, in the corresponding target symbol from Server~2, $a$, $c$, and $d$ are assigned indices $6$, $5$, and $1$, respectively.
Thus, the user recovers $a_5$ and $a_6$.

Finally, since ${I^{[\mathrm{W}_{1}]}_{\{2,4\},\{1,3\}}=1}$, the symbol involving $b$ and $d$ is used as side information. 
Since both $b$ and $d$ are interference messages, not-previously-used indices must be assigned to them. 
Accordingly, in Server~1, both $b$ and $d$ are assigned index $3$, while in Server~2, both are assigned index $4$.
Now consider the corresponding target symbol from Server~1 involving $a$, $b$, $c$, and $d$. 
The messages $b$ and $d$ are assigned the same indices as in the side symbol from Server~2, namely, index $4$ for both $b$ and $d$.
The message $c$ is assigned index $8$ since $c_8$ was previously recovered from Server~2, while the message $a$ is assigned index $7$ since $a_7$ was not used previously at any server. 
Likewise, in the corresponding target symbol from Server~2, $a$, $b$, $c$, and $d$ are assigned indices $8$, $3$, $6$, and $3$, respectively.
Thus, the user recovers $a_7$ and $a_8$.
This completes the subpacket indexing and the recovery process in Round~2 and establishes correctness for the case ${\mathrm{W} = \mathrm{W}_1}$. 

Next, consider Table~\ref{tab:PSSR_W3}, which corresponds to the demand index set ${\mathrm{W}_3=\{3,4\}}$, i.e., the demand messages are $c$ and $d$. 
An optimal solution to the optimization problem yields the following values of $I^{[\mathrm W_3]}_{\mathrm U,\mathrm V}$ for all pairs ${(\mathrm{U},\mathrm{V})}$ with ${\mathrm{U}\subseteq[1:5]}$, ${\mathrm{U}\not\subseteq \mathrm{W}_3}$, and ${\mathrm{V}\subseteq \mathrm{W}_3\setminus \mathrm{U}}$: 
\[
I^{[\mathrm{W}_{3}]}_{\{5\},\{3\}}=2\;,\;
I^{[\mathrm{W}_{3}]}_{\{2\},\{4\}}=1\;,\;
I^{[\mathrm{W}_{3}]}_{\{1,3\},\{4\}}=1\;,\;
I^{[\mathrm{W}_{3}]}_{\{1,2,3\},\{4\}}=1,
\] 
and $I^{[\mathrm W_3]}_{\mathrm{U},\mathrm{V}}=0$ for all other pairs ${(\mathrm{U},\mathrm{V})}$. 
Additionally, it yields the following values of $J^{[\mathrm W_3]}_{\mathrm V,i}(k)$ for all pairs ${(\mathrm{V},i)}$ and $k$ with ${\mathrm{V} \subseteq \mathrm W_3}$, ${|\mathrm{V}|\geq 2}$, $i \in \mathrm{V}$, and ${k \in [|\mathrm{V}|:2]}$, which here reduces to ${(\mathrm{V},i)\in \{(\{3,4\},3),(\{3,4\},4)\}}$ and ${k=2}$:
\[
J^{[\mathrm{W}_{3}]}_{\{3,4\},3} (2)=0\;,\; J^{[\mathrm{W}_{3}]}_{\{3,4\},4} (2)=0.
\]

Starting with Round~1, subpacket indexing for the singleton symbols and for the symbols involving $c$ and $e$ proceeds in the same way as for ${\mathrm{W}_1=\{1,3\}}$, since ${I^{[\mathrm{W}_{3}]}_{\{5\},\{3\}}=I^{[\mathrm{W}_{1}]}_{\{5\},\{3\}}=2}$. 
Thus, the user recovers all the subpackets of $c$ in addition to $d_1$ and $d_2$.
Additionally, since ${I^{[\mathrm{W}_{3}]}_{\{2\},\{4\}}=1}$, the singletons $b_1$ and $b_2$ are used as side symbols to recover new subpackets of $d$ from the target symbols involving $b$ and $d$. 
Accordingly, in the corresponding target symbol from Server~1, $b$ and $d$ are assigned indices $2$ and $3$, respectively, while in the corresponding target symbol from Server~2, they are assigned indices $1$ and $4$, respectively.
Thus, the user recovers $d_3$ and $d_4$. 
This completes the subpacket indexing and the recovery process in Round~1. 

Next, we describe the subpacket indexing and the recovery process in Round~2. 
Since ${I^{[\mathrm{W}_{3}]}_{\{1,3\},\{4\}}=1}$, the side symbols involving $a$ and $c$ are paired with the target symbols involving $a$, $c$, and $d$. Accordingly, in the corresponding side symbol from Server~1, $a$ is assigned index $1$, while $c$ is assigned index $3$ since $c_3$ was previously recovered from Server~2. 
Likewise, in the corresponding side symbol from Server~2, $a$ and $c$ are assigned indices $2$ and $1$, respectively. 
Additionally, in the corresponding target symbol from Server~1, $a$ is assigned the same index $2$ as in the corresponding side symbol from Server~2, $c$ is assigned index $4$ since $c_4$ was previously recovered from Server~2, and $d$ is assigned index $5$ since $d_5$ was not used previously at any server. 
Similarly, in the corresponding target symbol from Server~2, $a$, $c$, and $d$ are assigned indices $1$, $2$, and $6$, respectively.
By combining these side and target symbols, the user recovers $d_5$ and $d_6$. 

Finally, since ${I^{[\mathrm{W}_{3}]}_{\{1,2,3\},\{4\}}=1}$, the side symbols involving $a$, $b$, and $c$ are paired with the target symbols involving $a$, $b$, $c$, and $d$.
Accordingly, in the corresponding side symbol from Server~1, both $a$ and $b$ are assigned index $3$ since $a_3$ and $b_3$ were not used previously at any server, and $c$ is assigned index $7$ since $c_7$ was previously recovered from Server~2.
Similarly, in the corresponding side symbol from Server~2, $a$, $b$, and $c$ are assigned indices $4$, $4$, and $5$, respectively. 
Additionally, in the corresponding target symbol from Server~1, both $a$ and $b$ are assigned the same index $4$ as in the corresponding side symbol from Server~2, $c$ is assigned index $8$ since $c_8$ was previously recovered from Server~2, and $d$ is assigned index $7$ since $d_7$ was not used previously at any server.
Similarly, in the corresponding target symbol from Server~2, $a$, $b$, $c$, and $d$ are assigned indices $3$, $3$, $6$, and $8$, respectively.
By combining these side and target symbols, the user recovers $d_7$ and $d_8$.
This completes the subpacket indexing and the recovery process in Round~2 and establishes correctness for the case ${\mathrm{W} = \mathrm{W}_3}$. 

\subsection{Proof of Privacy}

Privacy follows from the fact that, for any fixed server, the query has the same distribution for every candidate demand index set. 
Indeed, across Tables~\ref{tab:PSSR_W1}--\ref{tab:PSSR_W4}, the query seen by each server is structurally identical: 
the same supports appear with the same multiplicities in all cases, each message appears in the same pattern of retrieved symbols, and the subpackets of every message appearing at that server are distinct. 
The only variation across the tables lies in the particular subpacket indices appearing in those symbols. 
Since each message is randomly partitioned into subpackets, these indices appear to the server as uniformly random labels. 
Therefore, the query observed by any server is independent of the user's demand index set, which establishes privacy.

\subsection{Proof of Optimality}

In this example, the user retrieves $13$ symbols from each server, where each symbol has the size of one message subpacket, in order to recover all ${L^{*}=8}$ subpackets of the ${D=2}$ demand messages.
Thus, the retrieval rate of the proposed scheme is
\[
R_{*}=\frac{2\cdot 8}{2\cdot 13}=\frac{8}{13}.
\]
This matches the rate upper bound $R^{*}$ in Theorem~\ref{thm:PRSR capacity}, which, for the permutation  
\[\pi = \begin{pmatrix} 1 & 2 & 3 & 4\\ 1 & 4 & 2 & 3\end{pmatrix},\]
evaluates to
\begin{equation*}
R^{*} = 2\left(2+\frac{1}{2}\cdot 2+\frac{1}{4}\cdot 1+\frac{1}{8}\cdot 0 \right)^{-1} = \frac{8}{13}.
\end{equation*}
Thus, the proposed scheme is rate-optimal for this example.

Moreover, for the achievable rate ${R_{*} = 8/13}$, Theorem~\ref{thm:PRSR_L*_bounds} yields the lower bound 
\[
L_{*}= \frac{2\cdot 8}{\gcd(2 \cdot 8, 2 \cdot 13)}=8
\] on the subpacketization level. 
Since the proposed scheme uses subpacketization level ${L^{*}=8}$, it is therefore also optimal with respect to subpacketization level in this example. 

\subsection{Comparison with the MPIR Setting}

For comparison, consider the MPIR setting with ${N=2}$, ${K=5}$, and ${D=2}$, in which the candidate demand family consists of all pairs of messages. 
Since the PSSR demand family in this example is a subset of the MPIR demand family, any MPIR scheme for this parameter setting is also applicable to the PSSR instance considered here. 
The best-known MPIR scheme for this setting, due to~\cite{HWS2025}, achieves rate $82/135$, which is strictly smaller than the rate achieved by the proposed PSSR scheme, namely ${R_{*}=8/13}$, and requires subpacketization level $82$, which is substantially larger than the subpacketization level of the proposed scheme, namely ${L^{*}=8}$. 
Therefore, in this example, exploiting the restricted demand structure in PSSR yields gains in both achievable rate and subpacketization level relative to the MPIR setting.

\section{Computational Complexity and Symmetry Reduction}

In this section, we analyze the computational complexity of evaluating the converse bound and the achievable rate, and discuss how automorphisms of the demand family can reduce the complexity of both computations.

\subsection{Computation of the Converse Bound}

We analyze the computational complexity of brute-force evaluation of the converse bound $R^{*}$ in~\eqref{eq:PSR ub}.

Fix an arbitrary demand family with $E$ candidate demands, each of size $D$.
A direct brute-force implementation enumerates all permutations ${\pi:[1:E]\to[1:E]}$.
For each permutation, the objective can be computed using set unions and set differences over at most $E$ sets of size $D$, with cost $O(ED)$.
Thus, the direct brute-force complexity is $O(E!ED)$.

For the full demand family, $E=\binom{K}{D}$.
Without exploiting automorphisms, brute-force evaluation checks all $E!$ orderings and has complexity $O(E!ED)$.
Relabeling the $K$ message indices preserves the full demand family and leaves the objective unchanged.
These relabelings form an automorphism group of size $K!$.
Since no non-identity relabeling can fix an ordering of all $D$-subsets, then each orbit contains exactly $K!$ equivalent orderings, and the number of orbit representatives is $E!/K!$.
The resulting brute-force complexity is $O((E!/K!)ED)$.

\subsection{Computation of the Achievable Rate}

We count the variables and constraints in the optimization formulation used to evaluate the achievable rate $R_{*}$ in~\eqref{eq:PSSR lb}.

Fix an arbitrary demand family. 
We begin with the variables. 
The variables $T_{\mathrm{U}}$ are indexed by all non-empty subsets ${\mathrm{U}\subseteq [1:K]}$, and hence their number is ${2^K-1}$, which grows exponentially in $K$.
The variables $I^{[\mathrm{W}_{j}]}_{\mathrm{U},\mathrm{V}}$ are indexed by ${j\in[1:E]}$, ${\mathrm{V}\subseteq \mathrm{W}_j}$, $\mathrm{V}\neq \emptyset$, and ${\mathrm{U}\subseteq [1:K]\setminus \mathrm{V}}$, ${\mathrm{U}\not\subseteq \mathrm{W}_j}$. 
Fix ${j\in[1:E]}$ and a non-empty subset $\mathrm{V}\subseteq \mathrm{W}_j$. 
Then the number of admissible subsets $\mathrm{U}$ is ${2^{D-|\mathrm{V}|}(2^{K-D}-1)}$. 
Summing over all ${j\in[1:E]}$ and all non-empty ${\mathrm{V}\subseteq \mathrm{W}_j}$, the total number of variables $I^{[\mathrm{W}_{j}]}_{\mathrm{U},\mathrm{V}}$ is ${E(2^{K-D}-1)(3^D-2^D)}$, which grows linearly in $E$ and exponentially in both $K-D$ and $D$.
The variables $J^{[\mathrm{W}_{j}]}_{\mathrm{V},i}(k)$ are indexed by ${j\in[1:E]}$, ${\mathrm{V}\subseteq \mathrm{W}_j}$ with ${|\mathrm{V}|\geq 2}$, ${i\in \mathrm{V}}$, and ${k\in[|\mathrm{V}|:D]}$. 
For fixed $j$ and $\mathrm{V}$, the number of pairs ${(i,k)}$ is ${|\mathrm{V}|(D-|\mathrm{V}|+1)}$. 
Summing over all ${j\in[1:E]}$ and all ${\mathrm{V}\subseteq \mathrm{W}_j}$ with ${|\mathrm{V}|\geq 2}$, the total number of variables $J^{[\mathrm{W}_{j}]}_{\mathrm{V},i}(k)$ is ${ED((D+1)2^{D-2}-D)}$, which grows linearly in $E$ and exponentially in $D$.

We next count the constraints. 
The condition in~\eqref{eq:set_1_cons} yields ${E(2^K-2^D)}$ constraints, which grows exponentially in $K$. 
The condition in~\eqref{eq:set_2_cons} yields ${ED}$ constraints, which grows linearly in both $E$ and $D$. 
The condition in~\eqref{eq:set_3_cons} yields ${E(2^D-D-1)}$ constraints, which grows linearly in $E$ and exponentially in $D$. 
The condition in~\eqref{eq:indexing_cons_1} yields ${ED(D-1)}$ constraints, which grows linearly in $E$ and quadratically in $D$, and the condition in~\eqref{eq:indexing_cons_2} yields $K$ constraints, which grows linearly in $K$. 
Each of the conditions~\eqref{eq:T_non_neg}--\eqref{eq:L_geq_1} contributes as many constraints as the number of corresponding variables. Specifically,~\eqref{eq:T_non_neg} yields ${2^K-1}$ constraints,~\eqref{eq:I_non_neg} yields ${E(2^{K-D}-1)(3^D-2^D)}$ constraints,~\eqref{eq:J_non_neg} yields ${ED((D+1)2^{D-2}-D)}$ constraints, and~\eqref{eq:L_geq_1} yields one constraint.

We now specialize to the full demand family, ${E=\binom{K}{D}}$. 
Without exploiting automorphisms, the number of variables $T_{\mathrm{U}}$ grows exponentially in $K$, while the numbers of variables $I^{[\mathrm{W}_{j}]}_{\mathrm{U},\mathrm{V}}$ and $J^{[\mathrm{W}_{j}]}_{\mathrm{V},i}(k)$ grow linearly with $\binom{K}{D}$ and exponentially in $D$.
Similarly, the number of constraints scales linearly with $\binom{K}{D}$, with dominant terms exponential in $K$ and $D$. 

We next exploit the automorphisms of the full demand family.
Here, the automorphisms are exactly the relabelings of the message indices.
Since the full demand family contains all $D$-subsets of the $K$ messages, such relabelings leave the formulation unchanged.
They induce equivalence classes, or orbits, on the subset configurations that index the variables.
Rather than treating all equivalent configurations separately, we keep one representative from each orbit, as described next.

By symmetry, $T_{\mathrm{U}}$ depends only on ${u\coloneqq |\mathrm{U}|}$, and hence can be represented by variables $\widetilde{T}_u$, ${u\in[1:K]}$, yielding $K$ variables.
Similarly, $I^{[\mathrm{W}_{j}]}_{\mathrm{U},\mathrm{V}}$ depends only on ${u_1\coloneqq |\mathrm{U}\setminus \mathrm{W}_j|}$, ${u_2\coloneqq |\mathrm{U}\cap \mathrm{W}_j|}$, and ${v\coloneqq |\mathrm{V}|}$, and can be represented by variables $\widetilde{I}_{\{u_1,u_2\},v}$. 
The number of such variables is ${(K-D)\binom{D+1}{2}}$, which grows linearly in $K$ and quadratically in $D$.
Moreover, the variables ${J^{[\mathrm{W}_{j}]}_{\mathrm{V},i}(k)}$ depend only on ${v=|\mathrm{V}|}$ and $k$, and can be represented by variables $\widetilde{J}_v(k)$. 
The number of such variables is ${\binom{D}{2}}$, which grows quadratically in $D$.

\begin{figure}[!t]
\centering
\begin{minipage}{\columnwidth}
\begin{align}
& \sum_{v=1}^{D-u_2} \binom{D-u_2}{v}
\widetilde{I}_{\{u_1,u_2\},v}
\nonumber\\
& \quad
+
(N-1)
\sum_{v=1}^{u_2} \binom{u_2}{v}
\widetilde{I}_{\{u_1,u_2-v\},v}
\leq \widetilde{T}_{u_1+u_2},
\nonumber\\
& \hspace{1.5em}
\forall u_1\in[1:K-D],\
\forall u_2\in[0:D],
\label{eq:set_1_cons_reduced}
\\[0.4em]
& \widetilde{T}_{1}
+
(N-1)
\sum_{u_1=1}^{K-D}
\sum_{u_2=0}^{D-1}
\binom{K-D}{u_1}
\binom{D-1}{u_2}
\widetilde{I}_{\{u_1,u_2\},1}
\nonumber\\
& \quad
+
\sum_{v=2}^{D}
\sum_{k=v}^{D}
\binom{D-1}{v-1}
\widetilde{J}_{v}(k)
=
\frac{L}{N},
\label{eq:set_2_cons_reduced}
\\[0.4em]
& \widetilde{T}_{v}
+
(N-1)
\sum_{u_1=1}^{K-D}
\sum_{u_2=0}^{D-v}
\binom{K-D}{u_1}
\binom{D-v}{u_2}
\widetilde{I}_{\{u_1,u_2\},v}
\nonumber\\
& \quad
\geq
v
\sum_{k=v}^{D}
\widetilde{J}_{v}(k),
\quad
\forall v\in[2:D],
\label{eq:set_3_cons_reduced}
\\[0.4em]
& (N-1)\widetilde{T}_{1}
+
(N-1)^2
\sum_{k=1}^{m}
\sum_{u_1=1}^{K-D}
\binom{K-D}{u_1}
\binom{D-1}{k-1}
\nonumber\\
& \quad
\times
\widetilde{I}_{\{u_1,k-1\},1}
+
(N-1)
\sum_{k=2}^{m}
\sum_{v=1}^{k-1}
\binom{D-1}{v}
\widetilde{J}_{v+1}(k)
\nonumber\\
& \geq
N
\sum_{k=2}^{m+1}
\sum_{v=1}^{k-1}
\sum_{u_1=1}^{K-D}
\binom{K-D}{u_1}
\binom{D-1}{v}
\binom{D-v-1}{k-v-1}
\nonumber\\
& \quad
\times
\widetilde{I}_{\{u_1,k-v\},v}
+
\sum_{k=2}^{m+1}
\sum_{v=1}^{k-1}
v\binom{D-1}{v}
\widetilde{J}_{v+1}(k),
\nonumber\\
& \hspace{1.5em}
\forall m\in[1:D-1],
\label{eq:indexing_cons_1_reduced}
\\[0.4em]
& \sum_{u=1}^{K}
\binom{K-1}{u-1}
\widetilde{T}_{u}
\leq L,
\label{eq:indexing_cons_2_reduced}
\\[0.4em]
& \widetilde{T}_{u}\in\mathbb{N}_0,
\quad
\forall u\in[1:K],
\label{eq:T_non_neg_reduced}
\\[0.4em]
& \widetilde{I}_{\{u_1,u_2\},v}
\in\mathbb{N}_0,
\quad
\forall v\in[1:D],\
\forall u_1\in[1:K-D],
\nonumber\\
& \hspace{1.5em}
\forall u_2\in[0:D-v],
\label{eq:I_non_neg_reduced}
\\[0.4em]
& \widetilde{J}_{v}(k)\in\mathbb{N}_0,
\quad
\forall v\in[2:D],\
\forall k\in[v:D],
\label{eq:J_non_neg_reduced}
\\[0.4em]
& L\in\mathbb{N}.
\label{eq:L_geq_1_reduced}
\end{align}
\end{minipage}
\vspace{-0.4cm}
\end{figure}

Under this reduction, the optimization problem can be expressed in terms of the variables $\widetilde{T}_u$, $\widetilde{I}_{\{u_1,u_2\},v}$, and $\widetilde{J}_v(k)$, subject to the constraints~\eqref{eq:set_1_cons_reduced}--\eqref{eq:L_geq_1_reduced}, which are obtained by rewriting the constraints~\eqref{eq:set_1_cons}--\eqref{eq:L_geq_1} in terms of the reduced variables. 
We note that the symmetry leads to a significant reduction in the number of constraints. 
Specifically, condition~\eqref{eq:set_1_cons_reduced} yields ${(K-D)(D+1)}$ constraints, which grows linearly in $K$ and $D$. Condition~\eqref{eq:set_2_cons_reduced} yields one constraint. 
Conditions~\eqref{eq:set_3_cons_reduced} and~\eqref{eq:indexing_cons_1_reduced} yield ${D-1}$ constraints each, and~\eqref{eq:indexing_cons_2_reduced} yields one constraint. 
Conditions~\eqref{eq:T_non_neg_reduced}--\eqref{eq:L_geq_1_reduced} contribute $K$, ${(K-D)\binom{D+1}{2}}$, $\binom{D}{2}$, and one constraint, respectively. 

The above counts show that, after exploiting automorphisms, the numbers of variables and constraints scale polynomially in $K$ and $D$, specifically, linearly in $K$ and quadratically in $D$, rather than exponentially as in the original formulation.

We note that this reduced formulation subsumes the optimization framework introduced in~\cite{HWS2025}. 
Specifically, the framework in~\cite{HWS2025} is recovered by imposing the following additional restrictions: 
\begin{itemize}
\item For all ${u_1\in[1:K-D]}$, ${u_2\geq 1}$, and ${v\in[1:D]}$, 
\[{\widetilde{I}_{\{u_1,u_2\},v}=0},\]  so every side symbol contains only interference messages.
\item For all ${u_1\in[1:K-D]}$ and ${v\in[1:D]}$, 
\[
\widetilde{I}_{(u_1,0),v} =\frac{\widetilde{T}_{u_1+v}}{N-1}.
\]
This means that, to cancel the interference in target symbols with a fixed support at one server, the scheme uses, as side symbols, all symbols from the other ${N-1}$ servers that contain exactly the corresponding ${u_1}$ interference messages.
\item For all ${v\in[2:D]}$ and ${k>v}$, 
\[{\widetilde{J}_v(k)=0},\]
so each demand-only symbol with support size ${v}$ is recovered only in round ${v}$. 
\item For all ${v\in[2:D]}$, 
\[{\widetilde{J}_v(v)=\frac{1}{v}\sum_{i=0}^{K-D}\binom{K-D}{i}\widetilde{T}_{v+i}}.\]
This means that, after interference cancellation, the demand-only symbols recover equal numbers of subpackets from the demand messages involved in those symbols.
\end{itemize}

Although the formulation presented here admits a larger feasible class, it yields the same achievable rate as~\cite{HWS2025} in all tested full-demand-family instances. 
For general demand families, however, these additional degrees of freedom can be beneficial, as shown by the illustrative example in Section~VI.

\section{Open Problems and Future Directions}

Several important questions remain open regarding the fundamental limits and design of PSSR schemes.

The tightness of the converse bounds and the optimality of the achievable schemes are not known in general, either in terms of retrieval rate or subpacketization level.

Another open problem is to develop sharper converse bounds for structured classes of schemes. 
The rate converse derived here applies to arbitrary PSSR schemes and therefore does not exploit the balanced ${\{0,1\}}$-linear structure of the proposed construction, or linearity more broadly.

It also remains to identify other structured demand families, beyond the full and contiguous block demand families, for which the achievable scheme admits a closed-form characterization or can be derived from a substantially reduced optimization formulation.

Finally, the optimal tradeoff between retrieval rate and subpacketization level remains open. 
For a given lower bound on the retrieval rate, one may ask for the least required subpacketization level. 
Conversely, for a given upper bound on the subpacketization level, one may ask for the largest achievable retrieval rate.

Beyond these open problems, several broader directions remain for future work.

First, the PSSR setting considered in this work is prior-agnostic and assumes equal-length messages. 
Consequently, the converse and achievability results hold for any full-support prior on the demand family, but do not address unequal message lengths. 
This limitation is particularly important in view of semantic PIR with unequal message sizes, where non-uniform demand priors can affect the optimal retrieval rate~\cite{VBU2022}. 
A natural future direction is to study prior-aware PSSR with unequal message lengths, incorporating both the demand distribution and message heterogeneity into the converse bound and achievable scheme.

Another direction is to move beyond fixed-size demand sets. 
For example, in a medical dataset, a research company may wish to retrieve the records of all patients diagnosed with a given disease in order to perform an inference task. 
Since different diseases may be associated with different numbers of patients, the corresponding demand sets need not have the same cardinality. 
This motivates extending PSSR to such variable-size demand families.

The PSSR setting can also be generalized by allowing the user to possess side information.
In many practical settings, the user may already have access to some messages, or to functions of them, through prior interactions with the servers or with other users. 
In classical PIR and MPIR, such side information is known to yield more efficient schemes under various privacy requirements~\cite{KGHERS2017,KGHERS2017No0,HKGRS2018,SSM2018,LG2018,HKS2018,HKS2019Journal,HKS2019,KKHS12019,KKHS22019,CWJ2020,LG2020CISS,KH2021Journal,LJ2022,WHS2024,EH2024}. 
This motivates investigating whether side information can similarly improve efficiency in PSSR, and extending the optimization framework to accommodate such settings.

Furthermore, in related settings, protecting each demand message separately, rather than protecting the demand set as a whole, is known to yield more efficient schemes~\cite{HKRS2019,HS2021,HS2022LinCap}. 
This suggests studying whether analogous relaxations of the privacy requirement can also improve the efficiency of PSSR schemes. 
For certain structured demand families, such relaxations may yield gains beyond those previously shown for the full demand family, leading to substantially higher retrieval rates or much lower subpacketization levels.

Finally, the optimization framework developed here is restricted to a combinatorial class of linear combinations and therefore does not exploit the full algebraic design space of coded combinations. 
Since more algebraic constructions, such as those based on MDS-coded combinations, are known to provide gains in related settings~\cite{BU2018}, it would be interesting to extend the framework to more general coded combinations and investigate whether they can improve upon the current construction. 
This may reveal new rate--subpacketization tradeoffs beyond the current ${\{0,1\}}$-linear formulation.

\balance

\bibliographystyle{IEEEtran}
\bibliography{PIR_PC_Refs}

\end{document}